\documentclass[11pt]{article}
\usepackage{amsmath}
\usepackage{amssymb,amsbsy,amsthm}
\theoremstyle{plain}
\usepackage{graphicx}
\usepackage[dvipsnames]{xcolor}
\usepackage{enumerate}
\usepackage{cases}
\usepackage[margin=1in]{geometry}
\usepackage{pdfsync}
\usepackage{hyperref}

\allowdisplaybreaks[1]

\numberwithin{equation}{section}


\DeclareMathOperator{\R}{\mathbb{R}} 

\newcommand{\p}{\mathbb{P}} 
\newcommand{\E}{\mathbb{E}} 

\DeclareMathOperator{\Bin}{Bin} 



\DeclareMathOperator*{\argmax}{arg\,max} 
\DeclareMathOperator*{\argmin}{arg\,min} 


\newcommand{\wh}{\widehat}  
\newcommand{\wt}{\widetilde} 






\newcommand{\ML}{\mathrm{ML}} 

\newcommand{\vs}{\mathit{vs}} 



\def\cI{{\mathcal I}}


\newtheorem{theorem}{Theorem}
\newtheorem*{theorem*}{Theorem}


\begin{document}

\title{Rumor source detection with multiple observations \\ under adaptive diffusions}
\author{
	Mikl\'os Z.\ R\'acz
	\thanks{Princeton University; \texttt{mracz@princeton.edu}. Research supported in part by NSF grant DMS 1811724 and by a Princeton SEAS Innovation Award.} 
	\and
	Jacob Richey
	\thanks{University of Washington; \texttt{jfrichey@uw.edu}.}
}
\date{\today}

\maketitle


\begin{abstract} 
Recent work, motivated by anonymous messaging platforms, 
has introduced adaptive diffusion protocols 
which can obfuscate the source of a rumor: 
a ``snapshot adversary'' with access to the subgraph of ``infected'' nodes can do no better than randomly guessing the entity of the source node. 
What happens if the adversary has access to multiple independent snapshots? 
We study this question when the underlying graph is the infinite $d$-regular tree. 
We show that 
(1) a weak form of source obfuscation is still possible in the case of two independent snapshots, 
but (2) already with three observations there is a simple algorithm that finds the rumor source with constant probability, regardless of the adaptive diffusion protocol. 
We also characterize the tradeoff between local spreading and source obfuscation for adaptive diffusion protocols (under a single snapshot). 
These results raise questions about the robustness of anonymity guarantees when spreading information in social networks. 
\end{abstract}


\section{Introduction} \label{sec:intro} 

Detecting the source of information diffusion on a network is an important problem in network science, 
with applications such as 
finding the source of a virus epidemic or
finding the source of a rumor on Twitter.  
A prototypical graph on which source detection is studied is the infinite $d$-regular tree $\mathbb{T}_{d}$ (with $d\geq 3$), which is our focus in this paper as well.

\textbf{Rumor source detection.} 
Perhaps the simplest and most natural model of information diffusion on a network is the susceptible-infected (SI) model, where the rumor is spread along each edge of the network at a constant rate, and once a node is infected it remains infected forever. Shah and Zaman studied detecting the source in this model~\cite{shah2010detecting,sz11rumor}. Formally, at time $t=0$ a vertex $v^{*} \in \mathbb{T}_{d}$ is ``infected'' and the information propagates on the network according to the SI model; 
one then observes the subset $V_{t}$ of infected vertices at time $t$, which consists of $N_{t} := \left| V_{t} \right|$ vertices. We assume that the underlying graph (in this case $\mathbb{T}_{d}$) is known and hence the subgraph $G_{t}$ induced by the vertices in $V_{t}$ is also known. The goal is to find the rumor source~$v^{*}$.

The maximum likelihood estimator (MLE) 
$\widehat{v}_{\mathrm{ML}} := \argmax_{v \in V_{t}} \p \left( G_{t} \, \middle| \, v^{*} = v \right)$ 
has particularly nice properties in this setting~\cite{shah2010detecting,sz11rumor}. 
In particular, 
Shah and Zaman 
showed that it is computable in linear time and that it detects the source with constant probability. 
More precisely, they show (in~\cite{shah2016finding}) that there exists a universal constant $\alpha_{d} > 0$ such that 
$\lim_{t \to \infty} \p \left( \widehat{v}_{\mathrm{ML}} = v^{*} \right) =\alpha_{d}$ (when $d\geq 3$). 
Many results extend to more general settings such as random trees~\cite{shah2016finding}.

Wang et al.~\cite{wang2014} studied rumor source detection in the same setting but now with multiple independent observations; 
that is, observing the infected nodes 
$V_{t}^{(1)}, \ldots, V_{t}^{(k)}$ 
of $k$ independent diffusions started from the same source $v^{*}$. 
They show that the detection probability increases with $k$ and that it goes to $1$ exponentially as $k \to \infty$.

\textbf{Rumor source obfuscation.} 
The results above show that if information propagates according to the SI model, then the source can be found efficiently and with good probability (that is, with at least constant probability). 
In certain applications, such as anonymous messaging apps\footnote{Examples include Whisper~\cite{whisper}, Blind~\cite{blind}, and the now-defunct Yik Yak~\cite{yikyak} and Secret~\cite{secret}.}, this is undesirable. 
Motivated by these applications, 
Fanti~et~al.~\cite{fanti2015spy} asked whether it is possible to devise messaging protocols that can \emph{obfuscate} the rumor source, while at the same time still spreading information widely and quickly.

They devised a family of messaging protocols, 
termed \emph{adaptive diffusions}, 
for this purpose; see Section~\ref{sec:adaptive_diffusion} for a detailed description. 
Their main result shows that a specific messaging protocol within this family achieves \emph{perfect obfuscation}: 
under this spreading model a ``snapshot adversary'' 
can do no better than randomly guessing the source node: 
\begin{equation}\label{eq:perfect_obfuscation}
\p \left( \widehat{v}_{\mathrm{ML}} = v^{*} \, \middle| \, N_{t} = n \right) 
= \frac{1+o \left( 1 \right)}{n}.
\end{equation}
Many results extend to more general settings such as irregular trees~\cite{fanti2016irregular,fanti2017hide}. 

\textbf{Our results.} 
We study the source obfuscation guarantees that adaptive diffusion protocols can provide, in a couple of settings. 
First, we do this in the context of the adversary having multiple independent observations. 
We show that when an adversary has access to two independent observations then a weak form of obfuscation is still possible. 
However, when it has access to three or more independent snapshots, then source detection with constant probability is always possible, regardless of the adaptive diffusion protocol.

We also do this in the context of spreading information \emph{locally} around the source. 
We introduce a natural quantitative measure of local spreading, and characterize the tradeoff between local spreading and source obfuscation for adaptive diffusion protocols (under a single snapshot).

Put together, these results raise questions about the robustness of possible anonymity guarantees when spreading information in social networks. 
In order to precisely state our results, 
we first describe in Section~\ref{sec:adaptive_diffusion} the setting of information diffusion processes in general 
and adaptive diffusions in particular. 
We then state our results in Sections~\ref{sec:results} and~\ref{sec:local_spreading}.

\subsection{Information diffusion and adaptive diffusion} \label{sec:adaptive_diffusion} 

We define a (discrete time) information diffusion process on a graph $G = \left( V, E \right)$ as 
a (potentially random) increasing sequence of subgraphs 
$G_{0} \subseteq G_{1} \subseteq G_{2} \subseteq \ldots$, 
where $G_{t} = \left( V_{t}, E_{t} \right)$ is the subgraph induced by the vertices $V_{t}$ who have the information at time $t$. 
Throughout the paper we assume that $G_{0}$ consists of a single vertex $v^{*} \in G$, which we term the \emph{source}. 
We also assume that the information spreads along the edges of the graph and hence 
$V_{t+1} \subseteq V_{t} \cup \partial G_{t}$, 
where 
$\partial G_{t} := \left\{ v \in V : v \notin V_{t}, \exists w \in V_{t} : (v,w) \in E \right\}$ 
denotes the (outer) vertex boundary of $G_{t}$, 
consisting of vertices that are not in $G_{t}$ but which are connected to a vertex in $G_{t}$.

A simple example is when every vertex who obtains the information spreads it to all its neighbors in the next time step. 
In this case $G_{t} = B_{t} \left( v^{*} \right)$ for every $t \geq 0$, 
where 
$B_{r} \left( v \right) := \left\{ u \in V : \delta_{G} \left( u, v \right) \leq r \right\}$ 
denotes the (closed) ball of radius $r$ around vertex $v \in V$ 
(here $\delta_{G}$ denotes graph distance in $G$). 
The SI model mentioned above\footnote{Note that the SI model is often defined in continuous time. Viewing this continuous-time process at the times when a new vertex obtains the information, we obtain the described discrete time information diffusion process.} can be defined inductively as follows: 
given~$G_{t}$, let $v_{t+1}$ be a uniformly randomly chosen vertex from $\partial G_{t}$ 
and let $V_{t+1} := V_{t} \cup \left\{ v_{t+1} \right\}$.

The \emph{source detection} problem is the following: 
given the underlying graph $G$, 
the distribution of the sequence $\left\{ G_{t} \right\}_{t \geq 0}$, 
and a single observation $G_{t}$ at some time $t > 0$, 
the goal is to estimate the source $v^{*}$. 
This is also known as the ``snapshot adversary'' model, since we get to observe $G_{t}$, a single snapshot in time.

\emph{Adaptive diffusion}, introduced by Fanti~et~al.~\cite{fanti2015spy}, is a family of information diffusion processes designed with \emph{source obfuscation} in mind. 
We now introduce and define adaptive diffusion on 
$\mathbb{T}_{d} =: G$; 
we refer the reader to~\cite{fanti2017hide} for a comprehensive introduction 
more generally.

Adaptive diffusion is defined via an auxiliary process, 
the path $\left\{ \vs_{t} \right\}_{t \geq 0}$ of a so-called \emph{virtual source}.  
This is a time-inhomogeneous Markov chain, which we now define. 
Initially, the virtual source is the same as the true source: $\vs_{0} := v^{*}$. 
Next, 
it moves to a uniformly random neighbor of~$v^{*}$: 
\begin{equation*}\label{eq:adfirst_step}
\p \left( \vs_{1} = w \right) = \frac{1}{d} \mathbf{1}_{\left\{ (w,v^{*}) \in E \right\}}. 
\end{equation*}
For the remainder of the path, assuming that $\vs_{t}$ is given, $\vs_{t+1}$ is defined as follows. 
If $t$ is odd, then $\vs_{t+1} = \vs_{t}$; that is, the virtual source stays put. 
If $t$ is even, then the virtual source either stays put 
or it moves to one of its $d-1$ neighbors that it has not visited before; 
in the latter case, it chooses the neighbor to move to uniformly at random. 
Note that if the virtual source moves then it moves \emph{away} from the source $v^{*}$. 
The probability of choosing one action or the other is a function of time $t$ and also the distance of $\vs_{t}$ from $v^{*}$ (hence the name \emph{adaptive}). 
Specifically, let $h_{t} := \delta_{G}(\vs_{t}, v^{*})$ denote the graph distance between $\vs_{t}$ and $v^{*}$. 
Then 
\begin{itemize} 
\item with probability $\alpha(t,h_{t})$ we have that $\vs_{t+1} = \vs_{t}$, that is, the virtual source stays put;  
\item and with probability $1 - \alpha(t,h_{t})$ the virtual source moves to one of its $d-1$ neighbors that it has not visited before, chosen uniformly at random. 
\end{itemize}
The probabilities $\alpha(t,h) \in \left[ 0, 1 \right]$, with $t \in \left\{ 2, 4, 6, \ldots \right\}$ and $h \in \left\{ 1, 2, 3, \ldots, t/2 \right\}$, 
are parameters that fully describe the distribution of the path $\left\{ \vs_{t} \right\}_{t \geq 0}$ of the virtual source. 
Each choice of parameters defines a particular 
Markov chain 
and thus a particular adaptive diffusion protocol.

Having defined the path of the virtual source, we are now ready to define the associated adaptive diffusion protocol, given $\left\{ \vs_{t} \right\}_{t \geq 0}$. 
When $t$ is even, the set of infected nodes is defined as 
\begin{equation*}\label{eq:ad_def}
 V_{t} := \left\{ v \in V : \delta_{G} \left( v, \vs_{t} \right) \leq t/2 \right\}. 
\end{equation*}
\begin{figure}[h!]
\begin{minipage}[c]{0.47\textwidth}
\centering 
 \includegraphics[width=0.88\textwidth]{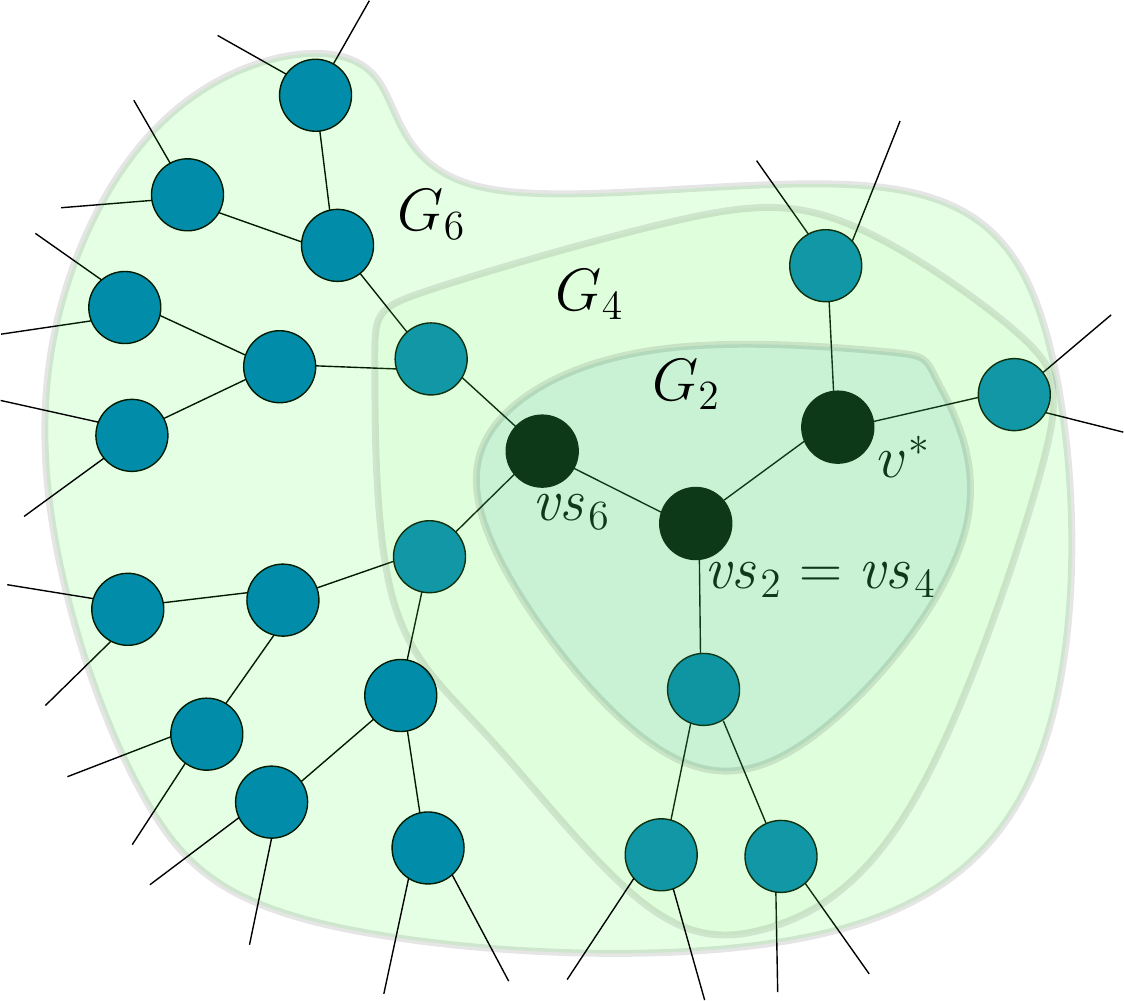} 
\end{minipage}\hfill
\begin{minipage}[c]{0.47\textwidth}
 \caption{An example of an adaptive diffusion spreading on the infinite $3$-regular tree~$\mathbb{T}_3$. 
 Here the virtual source moves to a uniformly randomly chosen neighbor of the source $v^{*}$ at time 1, then it stays put for several time steps, and moves again at time 5. The shaded regions show the infected subgraphs $G_{t}$ for $t \in \{2, 4, 6\}$; note that they are all balanced trees of depth~$t/2$, centered at the virtual source $\vs_{t}$.}
 \label{fig:ad}
\end{minipage}
\end{figure}
That is, $V_{t}$ is a ball of radius $t/2$---equivalently, a balanced tree of depth $t/2$---around the virtual source $\vs_{t}$. 
For $t$ odd, the set of infected nodes $V_{t}$ is chosen so that 
$\left\{ G_{t} \right\}_{t \geq 0}$ satisfies $V_{t+1} \subseteq V_{t} \cup \partial G_{t}$.\footnote{Specifically, we have the following. 
First, $V_{1} := \left\{ \vs_{0}, \vs_{1} \right\}$. 
Next, for $t \geq 3$ such that $t$ is odd, we distinguish two cases. 
If $\vs_{t} = \vs_{t-1}$, then $V_{t} := V_{t-1}$; 
that is, if the virtual source stays put (instead of moving), then the set of infected nodes is unchanged. 
If $\vs_{t} \neq \vs_{t-1}$, then 
$V_{t} := V_{t-1} \cup \left\{ w \in \partial G_{t-1} : \delta_{G} \left( w, \vs_{t} \right) = (t-1)/2 \right\}$; 
in other words, if the virtual source moves, then the information is spread in the same direction.} 
The resulting information diffusion process is called an \emph{adaptive diffusion}; 
see Figure~\ref{fig:ad} for an illustration. 
Note that by construction adaptive diffusion spreads the information to 
$N_{t} \asymp \left( d - 1 \right)^{t/2}$ 
nodes at time $t$, 
which is only a factor of two slower than the fastest possible spread.

Fanti~et~al.~\cite{fanti2015spy} show that a particular adaptive diffusion protocol---specifically, the process with   
$\alpha(t,h) := 
( \left( d - 1 \right)^{t/2 - h + 1} - 1 ) / 
( \left( d - 1 \right)^{t/2 + 1} - 1 )$---perfectly obfuscates the source from an adversary who sees a snapshot of a single diffusion. 
The key property of this construction is that, for $t$ even, 
all vertices in $V_{t} \setminus \left\{ \vs_{t} \right\}$ are equally likely to be the original source $v^{*}$ and hence an adversary can do no better than randomly guess among them. 
A similar statement holds also for $t$ odd, 
showing that the MLE satisfies~\eqref{eq:perfect_obfuscation}.

\subsection{Results: adaptive diffusion with multiple independent observations} \label{sec:results} 

In many applications it is common for individuals to send not just one but multiple messages over time, each one spreading over the same underlying network. If an adversary has access to a snapshot of each such diffusion, then they are in a much better position to find the source. 
Is it still possible to obfuscate the source with some form of information diffusion? We investigate this question in the context of adaptive diffusion protocols.

We show that when an adversary has access to two independent observations, a weak form of obfuscation is still possible with adaptive diffusion. However, when three or more independent observations are available, detection with constant probability is always possible, regardless of which adaptive diffusion protocol is used. This is the content of Theorems~\ref{thm:two_obs} and~\ref{thm:multi_obs}.

\begin{theorem}[Two independent observations]\label{thm:two_obs} 
Suppose that information is spread according to an adaptive diffusion protocol on $\mathbb{T}_d$, $d \geq 3$, and that an adversary has two independent observations of infected subgraphs, $G_{t_{1}}^{1}$ and $G_{t_{2}}^{2}$, started from a fixed source $v^{*}$. 
\begin{enumerate}[(a)] 
\item \label{thm:two_obs_detect} 
There exists a computationally efficient estimator $\widehat{v}$, which is agnostic to the adaptive diffusion protocol, such that if $t_{1}, t_{2} \geq 2$ then 
\begin{equation*}
\p \left(\widehat{v} = v^{*} \right) \geq \frac{d-1}{d} \cdot \frac{2}{\min \left\{ t_{1}, t_{2} \right\}}.
\end{equation*}
\item \label{thm:two_obs_obfu} 
There exists an adaptive diffusion protocol such that  the maximum likelihood estimator $\widehat{v}_{\ML}$ satisfies for all $t_{1}, t_{2} \geq 1$ that 
\begin{equation}\label{eq:two_obs_obfu}
\p\left(\widehat{v}_{\ML}  = v^{*} \right) \leq \frac{d-1}{d} \cdot \frac{7}{\min \left\{ t_{1}, t_{2} \right\}}.
\end{equation}
\end{enumerate}
\end{theorem}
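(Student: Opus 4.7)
The plan for both parts hinges on a common geometric observation: at any even time $t \geq 2$, the infected set $V_t$ is a ball of radius $t/2$ in $\mathbb{T}_d$, so its unique center recovers the virtual source $\vs_t$ from the snapshot. Let $c_i$ denote the center recovered from $G_{t_i}^{(i)}$, set $\ell := \delta(c_1, c_2)$ (graph distance in $\mathbb{T}_d$), and let $w_i$ denote the neighbor of $v^*$ to which the $i$-th virtual source jumps at time $1$. Since $w_1, w_2$ are independent and uniform on the $d$ neighbors of $v^*$, $\p(w_1 \neq w_2) = (d-1)/d$, and on this event $c_1, c_2$ lie in different subtrees of $v^*$, so the unique path from $c_1$ to $c_2$ in $\mathbb{T}_d$ passes through $v^*$ at path-position $h_1 := \delta(c_1, v^*)$, with $h_2 := \ell - h_1 = \delta(c_2, v^*)$ satisfying the constraints $1 \leq h_i \leq \lfloor t_i/2 \rfloor$.

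For part (a), the estimator $\widehat{v}$ identifies $c_1, c_2$, computes the unique path between them in $\mathbb{T}_d$, and outputs a uniformly random vertex $u_j$ on this path whose path-position $j$ satisfies $1 \leq j \leq \lfloor t_1/2 \rfloor$ and $1 \leq \ell - j \leq \lfloor t_2/2 \rfloor$. A short case split over the three regimes $\ell \leq t_1/2$, $t_1/2 < \ell \leq t_2/2$, and $\ell > t_2/2$ shows that the number $N$ of valid candidate positions is at most $\lfloor \min(t_1, t_2)/2 \rfloor$, so on the event $\{w_1 \neq w_2\}$ the true source $v^*$ is among the candidates, giving conditional success at least $2/\min(t_1, t_2)$. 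Multiplying by $\p(w_1 \neq w_2)$ yields the stated bound.

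For part (b), I would use the adaptive diffusion defined by $\alpha(t, h) := (t + 2 - 2h)/(t + 2)$. A direct induction on even $t$ shows that under this protocol $h_t := \delta(\vs_t, v^*)$ is uniformly distributed on $\{1, \ldots, t/2\}$ conditional on $v^*$, so by symmetry of $\mathbb{T}_d$ within each subtree of $v^*$, $\p(\vs_t = c \mid v^* = v) \propto (d-1)^{-\delta(c, v)}$ as a function of $v$ on $V_t \setminus \{c\}$. Independence of the two observations then gives a joint likelihood proportional to $(d-1)^{-\delta(c_1, v) - \delta(c_2, v)}$ on the intersection of the two balls minus the two centers, and the tree identity $\delta(c_1, v) + \delta(c_2, v) \geq \delta(c_1, c_2)$, with equality iff $v$ lies on the $c_1$-to-$c_2$ path, shows that the MLE argmax is exactly the set of $N$ valid path positions from part (a); $\widehat{v}_{\ML}$ then picks one of them uniformly at random.

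On $\{w_1 = w_2\}$ the $c_1$-to-$c_2$ path lies in a single subtree of $v^*$ and hence avoids $v^*$, so the MLE succeeds with probability zero; on $\{w_1 \neq w_2\}$ the source $v^*$ is one of the $N$ argmax candidates and the conditional success is $1/N$. Averaging against $(h_1, h_2)$ uniform on $\{1, \ldots, \lfloor t_1/2 \rfloor\} \times \{1, \ldots, \lfloor t_2/2 \rfloor\}$ and partitioning by $\ell = h_1 + h_2$, one checks that in each of the three regimes the number of pairs $(h_1, h_2)$ with sum $\ell$ is exactly $N(\ell)$, so the expectation collapses to $\EE[1/N \mid w_1 \neq w_2] = (4/(t_1 t_2)) \cdot ((t_1 + t_2)/2 - 1) \leq 4/\min(t_1, t_2)$. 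Multiplying by $(d-1)/d$ gives the bound with constant $4$; the claimed constant $7$ leaves slack for the odd values of $t_i$, where $V_t$ is a ball plus a partial directional boundary, and the main obstacle will be keeping the case analysis straight under this slightly messier geometry.
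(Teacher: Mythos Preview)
Your approach for both parts is essentially the paper's: the same ``different first steps'' event (the paper calls it $A_{12}$), the same path-intersection estimator in (a), the same uniform protocol $\alpha(t,h)=(t+2-2h)/(t+2)$ in (b), the same identification of the MLE argmax with the interior path positions, and the same collapsing of the double sum via ``number of $(h_1,h_2)$ with $h_1+h_2=\ell$ equals $N(\ell)$'' to get $(s+t-1)/(st)$.

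There is one genuine gap in part (b). Your claim that on $\{w_1=w_2\}$ the MLE succeeds with probability zero is false. Your characterization of the MLE argmax as the set of valid interior path positions is only valid when $\ell=\delta(c_1,c_2)\ge 2$; when $\ell\le 1$ there are \emph{no} interior path positions, yet the likelihood is still well-defined on $V_{t_1}^1\cap V_{t_2}^2\setminus\{c_1,c_2\}$ and is maximized at the neighbors of $\{c_1,c_2\}$. If $h_1=h_2=1$ and both virtual sources step to the same neighbor (so $c_1=c_2$), then $v^*$ is one of those $d$ neighbors and the MLE picks it with probability $1/d$; similarly when $\ell=1$. The paper computes this extra contribution as $\p(\wh{v}_{\ML}=v^*\mid A_{12}^C)=\tfrac{4}{t_1 t_2}\bigl(\tfrac{1}{d}+\tfrac{1}{d-1}\bigr)$, multiplied by $\p(A_{12}^C)=1/d$. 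This is $O(1/(t_1 t_2))$ and is absorbed by the slack in your last inequality $\tfrac{2(t_1+t_2)-4}{t_1 t_2}\le \tfrac{4}{\min(t_1,t_2)}$, so your constant $4$ for even times survives---but your intermediate assertion is wrong as written and needs the degenerate case $\ell\le 1$ handled.
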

A few comments are in order. 
First, the bounds in parts~\eqref{thm:two_obs_detect} and~\eqref{thm:two_obs_obfu} above match up to a small constant factor, hence this is best possible within the family of adaptive diffusions. 
Next, the detection probability in~\eqref{eq:two_obs_obfu} still vanishes as $t = \min \left\{ t_{1}, t_{2} \right\} \to \infty$, but only very slowly---exponentially more slowly than in the case of one observation (see~\eqref{eq:perfect_obfuscation} and recall that $N_{t} \asymp (d-1)^{t/2}$ is exponential in $t$). 
We also note that the adaptive diffusion protocol in part~\eqref{thm:two_obs_obfu} is different from the one used by Fanti~et~al.~\cite{fanti2015spy} to achieve perfect obfuscation in the case of a single observation; 
in fact, if this latter adaptive diffusion protocol is used to independently spread two diffusions, then the estimator $\widehat{v}$ in part~\eqref{thm:two_obs_detect} succeeds at finding the source with constant probability. 
Finally, we mention that the estimator in part~\eqref{thm:two_obs_detect} is essentially the same as the MLE in part~\eqref{thm:two_obs_obfu} when $t_{1}$ and $t_{2}$ are both even---see Section~\ref{sec:two_obs} for details.

Once the adversary has three independent observations, 
not even weak obfuscation is possible with adaptive diffusion. 
In fact, the detection probability converges to one exponentially quickly in the number of observations (see~\eqref{eq:exp_prob} below), 
extending the results of Wang~et~al.~\cite{wang2014} for the SI~model to the family of adaptive diffusions. 

\begin{theorem}[Three or more independent observations]\label{thm:multi_obs} 
Suppose that information is spread according to an adaptive diffusion protocol on $\mathbb{T}_d$, $d \geq 3$, and that an adversary has $k \geq 3$ independent observations of infected subgraphs, $G_{t_{i}}^{i}$ for $i \in \left\{ 1, \ldots, k \right\}$, started from a fixed source $v^{*}$. 

When $k = 3$, there is a computationally efficient estimator $\widehat{v}$, which is agnostic to the adaptive diffusion protocol, satisfying
\begin{equation*}
\p \left(\widehat{v} = v^{*}  \right) \geq \frac{(d-1)(d-2)}{d^2}.
\end{equation*}
More generally, 
there exists a computationally efficient estimator $\widehat{w} = \widehat{w}(k)$, which is agnostic to the adaptive diffusion protocol, such that 
\begin{equation}\label{eq:exp_prob}
\p \left(\widehat{w} = v^{*} \right) \geq 1 - d \times \exp \left( - \tfrac{\left( d - 2 \right)^{2}}{2d^{2}} k \right).
\end{equation}
\end{theorem}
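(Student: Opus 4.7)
The plan is to construct a single efficient estimator $\widehat{w}$ that handles all $k \geq 3$ uniformly, exploiting the key structural fact that every virtual source path is a geodesic in $\mathbb{T}_d$ emanating from $v^*$. Concretely, from each snapshot $G_{t_i}^i$ one can recover the virtual source position $w_i := \vs_{t_i}^i$ (as the unique center of the ball when $t_i$ is even; an analogous extraction works for odd $t_i$ using the construction in the footnote). The structural consequence is that $w_i$ lies in the subtree of $v^*$ rooted at the first-step direction $u^{(i)} := \vs_1^i$, and that $u^{(1)}, \ldots, u^{(k)}$ are i.i.d.\ uniform on the $d$ neighbors of~$v^*$.

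For $v \in \mathbb{T}_d$ and a neighbor $u$ of $v$, let $T_u(v)$ denote the component of $\mathbb{T}_d \setminus \{v\}$ containing $u$, and set $n_u(v) := |\{i : w_i \in T_u(v)\}|$. I define $\widehat{w}$ to be any vertex $v$ in the finite intersection $\bigcap_i V_{t_i}^i$ satisfying $\max_{u \sim v} n_u(v) < k/2$ (and a default vertex otherwise); this is polynomial time, since the intersection is contained in each $V_{t_i}^i$. The main correctness claim is that, on the event $\mathcal{E} := \{|\{i : u^{(i)} = u\}| < k/2 \text{ for every neighbor } u \text{ of } v^*\}$, the estimator outputs $v^*$. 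For $v = v^*$, the structural fact gives $n_u(v^*) = |\{i : u^{(i)} = u\}| < k/2$ on $\mathcal{E}$, so $v^*$ passes the test. For any $v \neq v^*$, let $u_0$ denote the neighbor of $v^*$ on the path to $v$ and $u^*$ the neighbor of $v$ on the path to $v^*$; for every $i$ with $u^{(i)} \neq u_0$ the $w_i$-to-$v$ path must pass through $v^*$, so $w_i \in T_{u^*}(v)$. Hence $n_{u^*}(v) \geq k - |\{i : u^{(i)} = u_0\}| > k/2$ on $\mathcal{E}$, and $v$ fails the test. So $v^*$ is the unique valid vertex on $\mathcal{E}$.

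To bound $\p(\mathcal{E}^c)$: for each fixed neighbor $u$, $|\{i : u^{(i)} = u\}| \sim \Bin(k, 1/d)$, and Hoeffding's inequality yields $\p(|\{i : u^{(i)} = u\}| \geq k/2) \leq \exp(-2k(1/2 - 1/d)^2) = \exp(-k(d-2)^2/(2d^2))$; a union bound over the $d$ neighbors of $v^*$ then yields the tail bound~\eqref{eq:exp_prob}. For the $k = 3$ bound, the event $\mathcal{E}$ simplifies: requiring every neighbor to appear fewer than $3/2$ times is equivalent to $u^{(1)}, u^{(2)}, u^{(3)}$ being all distinct, which by direct counting has probability $d(d-1)(d-2)/d^3 = (d-1)(d-2)/d^2$. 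The main (and only minor) obstacle I anticipate is the careful handling of very small $t_i$ (in particular $t_i = 1$, where the virtual source may not be uniquely recoverable from $G_{t_i}^i$), but the first-step direction $u^{(i)}$ is always encoded in $G_{t_i}^i$ as the ``asymmetric'' direction away from the center, so this is routine bookkeeping rather than a conceptual difficulty.
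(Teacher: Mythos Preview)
Your proposal is correct and follows essentially the same approach as the paper: extract a virtual source (or an endpoint of the central edge when $t_i$ is odd) from each snapshot, test vertices via the criterion $\max_{u\sim v} n_u(v) < k/2$, show that on the event where no first-step direction receives $\geq k/2$ samples the unique passing vertex is $v^*$, and bound the complementary event via a Binomial tail and a union bound over the $d$ neighbors. The one cosmetic difference is that the paper constructs a separate path-intersection estimator for $k=3$, whereas you observe directly that for $k=3$ the good event $\mathcal{E}$ is exactly ``all three first-step directions distinct,'' recovering the same $(d-1)(d-2)/d^2$ bound from the unified estimator.
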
 

This result follows from basic symmetry properties of adaptive diffusion (see Section~\ref{sec:multi_obs}). 



\subsection{Results: local spreading vs.\ source obfuscation} \label{sec:local_spreading} 

It is often desirable to not only spread information widely and quickly, but also to spread it \emph{locally} around the source. Indeed, the local neighborhood of the source typically consists of nodes that are closely related to the source, and the information that the source is spreading is often most relevant to this local neighborhood. 
In particular, this is true for scenarios where source obfuscation is relevant and important, for instance, spreading information about a local protest. 
At the same time, local spreading is at odds with source obfuscation. 
Here we introduce a natural way to quantify local spreading, and characterize the tradeoff between local spreading and source obfuscation for adaptive diffusion protocols (under a single snapshot). 

Formally, define for an adaptive diffusion the quantity 
\[
R_{t} := \max \left\{ r \geq 0 : B_{r} \left( v^{*} \right) \subseteq G_{t} \right\}.
\]
In words, $R_{t}$ is the radius of the largest ball of infected nodes centered at the rumor source at time $t$. Since $R_{t}$ is (in general) a random quantity, we may use 
$\E \left[ R_{t} \right]$ 
as a deterministic measure of local spreading of an adaptive diffusion protocol. 
Observe that $0 \leq R_{t} \leq t/2$ and hence also $0 \leq \E \left[ R_{t} \right] \leq t/2$. 

Ideally for local spreading we would like 
$\E \left[ R_{t} \right]$ 
to grow linearly with $t$; 
at the very least, local spreading requires 
$\E \left[ R_{t} \right] \to \infty$ as $t \to \infty$. 
However, the adaptive diffusion protocol that achieves perfect source obfuscation (see the end of Section~\ref{sec:adaptive_diffusion}) does not have local spreading: 
in fact, 
$\E \left[ R_{t} \right] \leq 1$ for all $t$ and, moreover, $\sup_{t} R_{t}$ is finite almost surely. 

This shows that source obfuscation guarantees have to be relaxed in order to have local spreading. It turns out that it is still possible to have reasonable source obfuscation guarantees---we refer to this as ``polynomial obfuscation'', see~\eqref{eq:poly_obfuscation} below---and local spreading at the same time. 
The following theorem characterizes this tradeoff for adaptive diffusion protocols (under a single snapshot). For simplicity, we focus here on even times $t$.

\begin{theorem}[Tradeoff between local spreading and source obfuscation]\label{thm:local_spreading}
Suppose that information is spread according to an adaptive diffusion protocol on $\mathbb{T}_d$, $d \geq 3$, and that an adversary observes, at an even time $t$, an infected subgraph, $G_{t}$, started from a fixed source $v^{*}$. 
\begin{enumerate}[(a)]
\item Suppose that the adaptive diffusion protocol achieves ``polynomial obfuscation'', 
that is, the following holds: 
\begin{equation}\label{eq:poly_obfuscation}
\p \left( \wh{v}_{\ML} = v^{*} \right) 
\leq \frac{C}{N_{t}^{\gamma}} 
\end{equation}
for some $\gamma \in \left( 0, 1 \right)$ and $C < \infty$, 
where recall that 
\[
N_{t} = \left| V_{t} \right| = \tfrac{d}{d-2} \left( \left( d - 1 \right)^{t/2} - 1 \right) + 1 
\asymp \left( d - 1 \right)^{t/2}.
\]  
Then 
\[
\E \left[ R_{t} \right] 
\leq 
\left( 1 - \gamma \right) \frac{t}{2} + \frac{\log \left( C t \right)}{\log \left( d -1 \right)} + 2.
\]
\item For every $\gamma \in \left( 0, 1 \right)$ there exists an adaptive diffusion protocol that satisfies~\eqref{eq:poly_obfuscation} with $C = 2(d-1)$ and also 
\[
\E \left[ R_{t} \right] \geq \left( 1 - \gamma \right) \frac{t}{2}
\]
for all even $t > 2 / \gamma$ (and for all even $t \leq 2 / \gamma$ we have $\E \left[ R_{t} \right] = t / 2  - 1$). 
\end{enumerate}
\end{theorem}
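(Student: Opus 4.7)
The first observation is that when $t$ is even, the infected set $V_t$ is a ball of radius $t/2$ centered at the virtual source $\vs_t$, so $\vs_t$ is uniquely determined by the observation. Writing $h_t := \delta_G(\vs_t, v^*)$, we have $R_t = t/2 - h_t$, so $\E[R_t] = t/2 - \E[h_t]$ and both directions of the theorem reduce to controlling the distribution of $h_t$. The main tool is the symmetry-based identity
\[
\p(\wh{v}_{\ML} = v^*) = \max_{h \geq 1} \frac{\p(h_t = h)}{n(h)},
\]
where $n(h)$ denotes the number of vertices of $\mathbb{T}_d$ at distance $h$ from a fixed vertex (so $n(h) = d(d-1)^{h-1}$ for $h \geq 1$). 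The identity holds because the translation symmetry of $\mathbb{T}_d$ implies that, conditional on $h_t = h$, the vertex $\vs_t$ is uniform among the $n(h)$ vertices at distance $h$ from $v^*$; hence the likelihood $\p(V_t \mid v^* = v)$ depends on $v$ only through $\delta_G(v, \vs_t)$, and the MLE picks a uniformly random vertex at the distance that maximizes $\p(h_t = h)/n(h)$.

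\textbf{Part (a).} The polynomial obfuscation hypothesis combined with the identity gives $\p(h_t = h) \leq C n(h)/N_t^\gamma$ for every $h$. Setting $h_0 := \lfloor \gamma t/2 \rfloor - k$ and summing,
\[
\p(h_t < h_0) \leq \frac{C \, |B_{h_0 - 1}(v^*)|}{N_t^\gamma} \leq \frac{Cd}{d-2}(d-1)^{-k-1},
\]
using $|B_r(v^*)| \leq \tfrac{d}{d-2}(d-1)^r$ and $N_t \geq (d-1)^{t/2}$. Choosing $k$ slightly larger than $\log(Ct)/\log(d-1)$ drives this bound below $1/t$, so $\E[h_t] \geq h_0(1 - 1/t) \geq \gamma t/2 - \log(Ct)/\log(d-1) - 2$ (using $h_t \leq t/2$), and the claimed bound on $\E[R_t]$ follows.

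\textbf{Part (b).} I would propose the deterministic tracking protocol $\alpha(t, h) := \mathbf{1}\{h \geq \lfloor \gamma(t+2)/2 \rfloor\}$, under which at each even time $2k$ the virtual source moves iff its current distance to $v^*$ is strictly below the next-step target $\lfloor \gamma(k+1) \rfloor$. A short induction on $k$ shows $h_t = \max(1, \lfloor \gamma t/2 \rfloor)$ deterministically, so $R_t$ equals $t/2 - 1$ for even $t \leq 2/\gamma$ and is at least $(1-\gamma)t/2$ otherwise. Since $h_t$ is deterministic, the identity from the first paragraph gives $\p(\wh{v}_{\ML} = v^*) = 1/n(h_t)$; combining $n(h_t) \geq d(d-1)^{\lfloor \gamma t/2 \rfloor - 1}$ with $N_t \leq \tfrac{d}{d-2}(d-1)^{t/2}$, the desired inequality $1/n(h_t) \leq 2(d-1)/N_t^\gamma$ reduces to the elementary estimate $(d-1)/d \leq 2((d-2)/d)^\gamma$, which holds for all $d \geq 3$ and $\gamma \in (0, 1)$.

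The main subtlety is establishing the symmetry/MLE identity; once this is in hand both parts reduce to elementary geometric-series computations. In part~(b) some care is also needed to handle the boundary regime $t \leq 2/\gamma$ (where $\lfloor \gamma t/2 \rfloor = 0$ but $h_t = 1$ because of the forced first move) and to match the stated constant $C = 2(d-1)$ exactly.
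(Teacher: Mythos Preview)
Your proposal is correct and follows essentially the same route as the paper: both start from $R_t = t/2 - h_t$ and the MLE identity $\p(\wh{v}_{\ML}=v^*) = \max_h p(t,h)/n(h)$, then for part~(a) use the obfuscation hypothesis to bound the lower tail of $h_t$ via a geometric sum, and for part~(b) construct a deterministic protocol with $h_t = \max(1,\lfloor \gamma t/2\rfloor)$ and verify the constant directly. The only cosmetic differences are that the paper writes the cutoff $m_t$ explicitly rather than as $\lfloor \gamma t/2\rfloor - k$, and defines $\alpha(t,h)$ independently of $h$ (since the trajectory is deterministic anyway, both formulations yield the same $h_t$).
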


In particular, we see from Theorem~\ref{thm:local_spreading} that the power $\gamma$ in polynomial obfuscation (see~\eqref{eq:poly_obfuscation}) and the speed $\left( 1 - \gamma \right) / 2$ of local spreading are directly related. 
This precisely quantifies the tradeoff between local spreading and source obfuscation guarantees: the faster local spreading is---that is, the smaller $\gamma$ is---the weaker the source obfuscation guarantee.

\subsection{Organization}

The rest of the paper is organized as follows. We first prove Theorem~\ref{thm:multi_obs} in Section~\ref{sec:multi_obs}, since the proof relies only on a simple symmetry property of adaptive diffusion protocols on $\mathbb{T}_{d}$ and provides good intuition for the subsequent proofs. 
We then prove Theorem~\ref{thm:two_obs} in Section~\ref{sec:two_obs}; 
the proof of part~\eqref{thm:two_obs_detect} is similar to the proof of Theorem~\ref{thm:multi_obs} in Section~\ref{sec:multi_obs}, 
while the proof of part~\eqref{thm:two_obs_obfu} requires understanding the maximum likelihood estimator in the case of two observations. (Some cases in the proof of part~\eqref{thm:two_obs_obfu} of Theorem~\ref{thm:two_obs} are deferred to Appendix~\ref{sec:two_obs_obfu_odd}.) 
In Section~\ref{sec:local_spreading_proofs} we turn to studying local spreading and prove Theorem~\ref{thm:local_spreading}. 
Finally, we conclude in Section~\ref{sec:discussion} by discussing some implications and limitations of our results, how they relate to other works, as well as further questions for future research.

\section{Adaptive diffusion with $k \geq 3$ independent observations} \label{sec:multi_obs} 

In this section we prove Theorem~\ref{thm:multi_obs}. 
The main idea is simple and relies on a symmetry property of adaptive diffusion protocols on $\mathbb{T}_{d}$: that they send the virtual source in a uniformly random direction. 
First, note that if we remove the source $v^{*}$ from the tree $\mathbb{T}_{d}$ then it breaks into $d$ subtrees. 
The main observation is that the virtual source of an adaptive diffusion is equally likely to be in each subtree. 
This symmetry property alone guarantees a constant probability of detection when there are at least three independent observations, as we now explain.

Assume for now that $t_{1}, \ldots, t_{k}$ are even; 
the proof is cleaner in this case, though not much changes in the general case.
Recall that for an adaptive diffusion protocol the infected tree $G_{t}$ is a ball with center $\vs_{t}$ when $t$ is even. Hence from the infected tree $G_{t}$ we may determine the virtual source $\vs_{t}$. We may thus assume that the adversary is given $k$ independent virtual sources $\vs^{1}, \vs^{2}, \ldots, \vs^{k}$ (the time stamps of the virtual sources are not relevant for what follows). 
The main observation is that if $\vs^{1}$, $\vs^{2}$, and $\vs^{3}$ are in different subtrees, then $v^{*}$ is the unique vertex at the intersection of the three shortest paths connecting $\vs^{1}$ and $\vs^{2}$,  $\vs^{1}$ and $\vs^{3}$, and $\vs^{2}$ and~$\vs^{3}$; see Figure~\ref{fig:three_obs} for an illustration.

\begin{figure}[h!]
\centering
\includegraphics[scale=0.8]{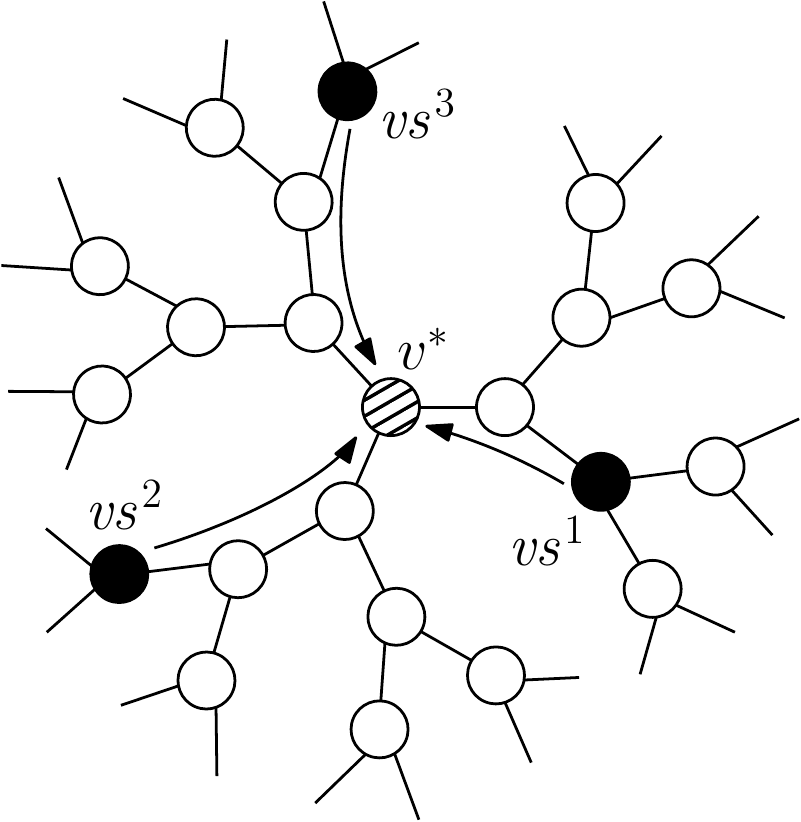}
\caption{\textbf{Detecting the source from three observations.} If the three virtual sources are in different subtrees, then the paths connecting them intersect in a single node: the source~$v^{*}$.}
\label{fig:three_obs}
\end{figure}

This immediately leads to a source detection algorithm: 
if the three shortest paths connecting $\vs^{1}$ and $\vs^{2}$,  $\vs^{1}$ and $\vs^{3}$, and $\vs^{2}$ and $\vs^{3}$ intersect at a single vertex, the algorithm outputs this vertex; 
if 
not, 
pick a vertex from the intersection uniformly at random. 
Since each virtual source is equally likely to be in each subtree, there is a constant probability that $\vs^{1}$, $\vs^{2}$, and $\vs^{3}$ are in different subtrees and therefore the algorithm successfully detects the source.

The proof that follows makes this formal and also presents an improved algorithm when the number of observations $k$ is large, in order to show that the detection probability goes to $1$ as $k \to \infty$.

\begin{proof}[Proof of Theorem~\ref{thm:multi_obs}] 
We start with some notational preliminaries. 
For distinct nodes $x, y \in \mathbb{T}_{d}$, let $T_{x}^{y}$ denote the subtree of $\mathbb{T}_{d}$ away from $y$ in the direction of $x$. In other words, if $y$ were removed from $\mathbb{T}_{d}$ then the tree would break into a forest of $d$ trees and $T_{x}^{y}$ is the tree that contains $x$. Formally, if $n_{x}^{y}$ is the neighbor of $y$ that is closest to $x$, then 
\[
T_{x}^{y} := \left\{ z \in \mathbb{T}_{d} : \delta \left( z, n_{x}^{y} \right) < \delta \left( z, y \right) \right\}.
\]

We first assume, for simplicity, that $t_{1}, \ldots, t_{k}$ are all even; this simplifies the proof and we explain at the end what changes if some of these times are odd. 
Then for every $i \in \left\{ 1, \ldots, k \right\}$ we have that $G_{t_{i}}^{i}$ is a ball (of radius $t_{i} / 2$) with center $\vs_{t_{i}}^{i}$, the virtual source at time $t_{i}$. 
Thus we may assume that the adversary observes $k \geq 3$ independent virtual sources $\vs^1, \ldots, \vs^k \in \mathbb{T}_d$; as the time indices do not play a role in what follows, we drop them for notational convenience. 
We first define an estimator $\wh{v}$ using only the first three samples ($\vs^{1}$, $\vs^{2}$, and $\vs^{3}$) and show that it detects the source with constant probability. 
For $i,j \in \{1,2,3\}$ let $P_{ij}$ denote the set of vertices in the unique path in $\mathbb{T}_{d}$ between $\vs^{i}$ and $\vs^{j}$. 
If the three paths $P_{12}$, $P_{13}$, and $P_{23}$ intersect in a single vertex, let $\wh{v}$ be this vertex. 
If the intersection of $P_{12}$, $P_{13}$, and $P_{23}$ contains more than one vertex, let $\wh{v}$ pick a vertex from this intersection uniformly at random. 

Consider the event $A$ where the three virtual sources take different first steps away from the source. 
By the construction of adaptive diffusion, this is the same as the virtual sources being in different subtrees for all positive times; 
that is, 
\[
A = \left\{ T_{\vs^{1}}^{v^{*}} \cap T_{\vs^{2}}^{v^{*}} = \emptyset \right\} 
\cap \left\{ T_{\vs^{1}}^{v^{*}} \cap T_{\vs^{3}}^{v^{*}} = \emptyset \right\} 
\cap \left\{ T_{\vs^{2}}^{v^{*}} \cap T_{\vs^{3}}^{v^{*}} = \emptyset \right\}.
\]
On the event $A$ we have that 
$P_{12} \cap P_{13} \cap P_{23} = \left\{ v^{*} \right\}$ 
and, hence, $\wh{v} = v^{*}$. 
That is, on the event $A$, 
the estimator correctly detects the source of the diffusion. 
Since the direction of the first step of a virtual source is uniformly random among the $d$ choices 
and the different samples are independent, 
we have that 
$\p \left( A \right) = \frac{\left( d - 1 \right) \left( d - 2 \right)}{d^{2}}$, 
which concludes this part of the proof.

We now explain how more samples can be used to achieve a detection probability that converges to 1 exponentially in $k$ as $k \to \infty$. 
For any vertex $v \in \mathbb{T}_{d}$ and $w$ a neighbor of $v$, define 
\[
N_{w} \left( v \right) := \# \left\{ j \in \left[ k \right] : \vs^{j} \in T_{w}^{v} \right\}.
\]
That is, $N_{w} \left( v \right)$ counts the number of virtual sources in the subtree of $\mathbb{T}_{d}$ away from $v$ in the direction of $w$. 
Using these quantities we define the following estimator: 
\begin{equation} \label{multi_est}
\wh{w} := \argmin_{v \in \mathbb{T}_{d}} \max_{w: (w,v) \in E} N_{w} \left( v \right), 
\end{equation}
provided that this is well-defined (i.e., the minimum is attained at a single vertex); 
if this is not well-defined, let $\wh{w}$ be an arbitrary vertex. 
Let $w_{1}, \ldots, w_{d}$ denote the neighbors of $v^{*}$ in $\mathbb{T}_{d}$ 
and let $Y := \left( N_{w_{1}} \left( v^{*} \right), \ldots, N_{w_{d}} \left( v^{*} \right) \right)$. 
We now argue that if $\left\| Y \right\|_{\infty} < k/2$, then $\wh{w} = v^{*}$, that is, the estimator correctly detects the source of the diffusion. 

First, observe that $\max_{w : (w, v^{*}) \in E} N_{w} \left( v^{*} \right) = \left\| Y \right\|_{\infty}$, which is less than $k/2$ under the assumption. 
Second, if $v \neq v^{*}$, then there must exist $w'$ a neighbor of $v$ and $i \in \left[ d \right]$ such that 
\[
T_{w'}^{v} \supseteq \bigcup_{j \in \left[ d \right] \setminus \left\{ i \right\}} T_{w_{j}}^{v^{*}}.
\]
This implies that 
\[
N_{w'} \left( v \right) \geq \sum_{j \in \left[ d \right] \setminus \left\{ i \right\}} N_{w_{j}} \left( v^{*} \right) 
= k - N_{w_{i}} \left( v^{*} \right) 
\geq k - \left\| Y \right\|_{\infty} > k/2,
\]
where we used that $\left\| Y \right\|_{1} = k$, as well as the assumption that $\left\| Y \right\|_{\infty} < k/2$. 
Consequently, 
$\max_{w : (w,v)\in E} N_{w} \left( v \right) 
\geq N_{w'} \left( v \right) > k/2$ 
and, hence, $\wh{w} \neq v$. 
We have thus shown that $\left\| Y \right\|_{\infty} < k/2$ implies that $\wh{w} = v^{*}$. 

To conclude, we estimate from below the probability that $\left\| Y \right\|_{\infty} < k/2$, or rather, we estimate from above the complimentary event that $\left\| Y \right\|_{\infty} \geq k/2$. 
First, by a union bound and symmetry we have that 
$\p \left( \left\| Y \right\|_{\infty} \geq k/2 \right) 
\leq d \times \p \left( N_{w_{1}} \left( v^{*} \right) \geq k/2 \right)$. 
Now since 
$N_{w_{1}} \left( v^{*} \right) \sim \Bin \left( k, 1/d \right)$, 
we have by a Chernoff bound that 
\[
\p \left( N_{w_{1}} \left( v^{*} \right) \geq k/2 \right) 
= \p \left( N_{w_{1}} \left( v^{*} \right) - \E \left[ N_{w_{1}} \left( v^{*} \right) \right] \geq \tfrac{d-2}{2d} k \right) 
\leq \exp \left( - \tfrac{\left( d-2 \right)^{2}}{2d^{2}} k \right).
\]

Finally, we return to our simplifying assumption that the observation times $t_{1}, \ldots, t_{k}$ are all even. If $t_{i}$ is odd, then there are two cases. If $G_{t_{i}}^{i}$ is a ball, then it is a ball with center $\vs_{t_{i}}^{i}$, so the adversary can again determine the virtual source at time $t_{i}$ and everything is unchanged. If $G_{t_{i}}^{i}$ is not a ball, then it is symmetric about the edge connecting $\vs_{t_{i}-1}^{i}$ and $\vs_{t_{i}}^{i}$. Thus the adversary can determine the set $\left\{ \vs_{t_{i}-1}^{i}, \vs_{t_{i}}^{i} \right\}$. Picking either element of the set as the virtual source, the remainder of the proof goes through unchanged.
\end{proof}

At first glance, it may appear that computing the estimator $\wh{w}$ requires solving a minimization problem over the entire infinite tree $\mathbb{T}_d$, but this is not the case. For every vertex $v$ that is not on a shortest path between two virtual sources we have that $\max_{w: (w,v)\in E} N_{w} (v) = k$ and therefore $\wh{w}$ must lie on a shortest path between two virtual sources. 
Moreover, the distance between any two virtual sources is at most $2\max_{i\in [k]} t_{i}$. 
Thus the minimization problem in~\eqref{multi_est} is over a set of size $O(k^{2} \max_{i\in [k]} t_{i})$ and can be solved efficiently.

\section{Adaptive diffusion with two independent observations} \label{sec:two_obs} 

In this section we prove Theorem~\ref{thm:two_obs}. 
We start with the proof of part~(\ref{thm:two_obs_detect}) in Section~\ref{sec:two_obs_detect}, 
which builds on similar ideas as the proof of Theorem~\ref{thm:multi_obs} in Section~\ref{sec:multi_obs}.
Then, in order to prove part~(\ref{thm:two_obs_obfu}) of Theorem~\ref{thm:two_obs}, 
we need to understand the maximum likelihood estimator---this is done in Section~\ref{sec:MLE}. 
Due to the nature of adaptive diffusion, we have to deal with even and odd times separately. 
To focus on the key insights and computations, we first prove Theorem~\ref{thm:two_obs}(\ref{thm:two_obs_obfu}) when $t_{1}$ and $t_{2}$ are both even---this is in Section~\ref{sec:two_obs_obfu_even}. 
The cases when one or both of $t_{1}$ and $t_{2}$ are odd are similar but more complicated, while not adding anything conceptually---hence we defer the proof in these cases to Appendix~\ref{sec:two_obs_obfu_odd}.

\subsection{Source detection}\label{sec:two_obs_detect}

The proof of Theorem~\ref{thm:two_obs}(\ref{thm:two_obs_detect}) builds on similar ideas as the proof of Theorem~\ref{thm:multi_obs} in Section~\ref{sec:multi_obs}. Recall the notation that we introduced in Section~\ref{sec:multi_obs}, which we use here. 

\begin{proof}[Proof of Theorem~\ref{thm:two_obs}(\ref{thm:two_obs_detect})]
 Assume first that $t_{1}$ and $t_{2}$ are even; this simplifies the proof and we explain at the end what changes if either time is odd. 
 Then for $i \in \left\{ 1, 2 \right\}$ we have that $G_{t_{i}}^{i}$ is a ball of radius $t_{i}/2$ with center $\vs_{t_{i}}^{i}$. 
 The adversary can thus determine the two virtual sources $\vs^{1} \equiv \vs_{t_{1}}^{1}$ and~$\vs^{2} \equiv \vs_{t_{2}}^{2}$. 
 
 By definition we always have that $v^{*} \in V_{t_{1}}^{1} \cap V_{t_{2}}^{2}$, that is, the source $v^{*}$ is contained in both sets of infected nodes. 
 Let $P_{12}$ denote the set of vertices that are on the path in $\mathbb{T}_{d}$ between $\vs^{1}$ and~$\vs^{2}$, excluding $\vs^{1}$ and $\vs^{2}$. Furthermore, define the set 
 $S := P_{12} \cap V_{t_{1}}^{1} \cap V_{t_{2}}^{2}$.  
 Let $A_{12}$ denote the event that $\vs^{1}$ and $\vs^{2}$ are in different subtrees away from $v^{*}$; that is, 
 \begin{equation}\label{eq:A12}
  A_{12} := \left\{ T^{v^{*}}_{\vs^{1}} \cap T^{v^{*}}_{\vs^{2}} = \emptyset \right\}.
 \end{equation}
 Since the two diffusions are independent and the first step of the virtual source is to a uniformly random neighbor of $v^{*}$, we have that $\p \left( A_{12} \right) = (d-1)/d$. The main observation is that, on the event $A_{12}$, we have that $v^{*} \in P_{12}$; see Figure~\ref{fig:two_obs} for an illustration.\footnote{The two virtual sources can indeed be excluded from $P_{12}$ and we still have that $v^{*} \in P_{12}$ on the event~$A_{12}$. This is because the virtual source can never be the true source, by construction. This assumes that $t_{1}, t_{2} \geq 1$---which holds, since we assume in the proof that $t_{1}, t_{2} \geq 2$. In any case, if $\min \left\{ t_{1}, t_{2} \right\} < 2$, then one of the observed snapshots contains at most two vertices, so a random guess succeeds in identifying the source with probability at least $1/2$.} Consequently, on the event $A_{12}$ we also have that~$v^{*} \in S$. 
 \begin{figure}[h!]
\centering
\includegraphics[scale=0.6]{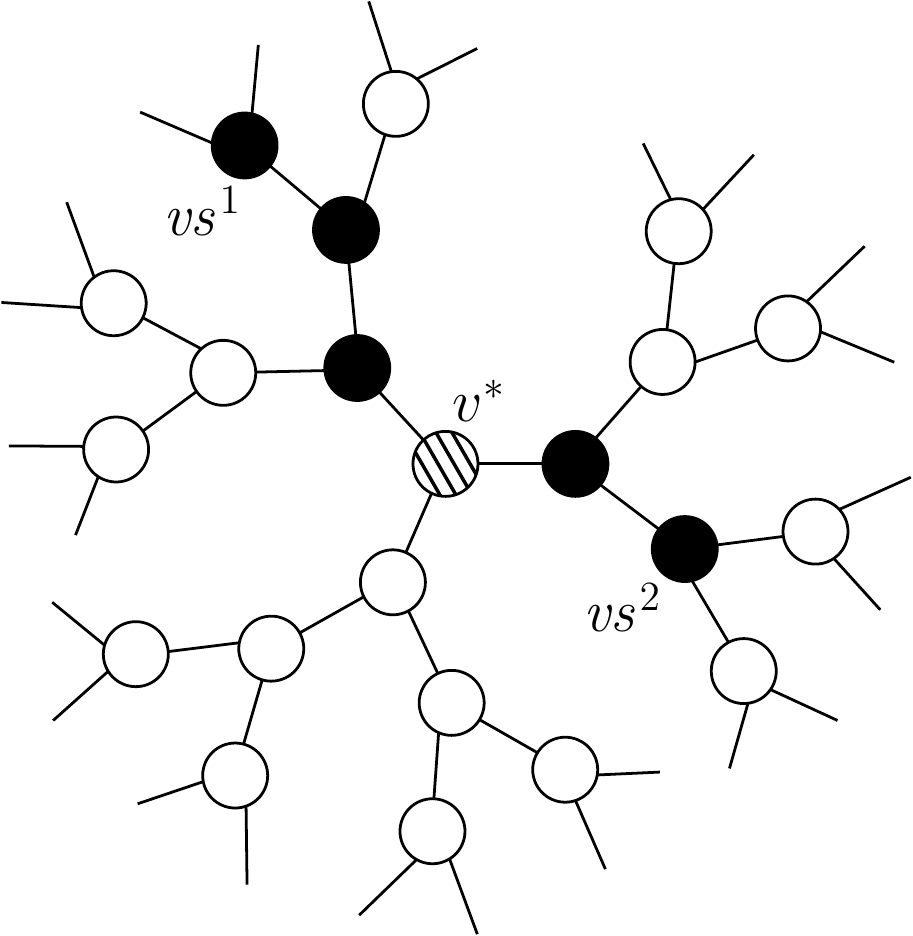}
\caption{\textbf{Detecting the source from two observations.} If the two virtual sources are in different subtrees, then the path connecting them contains the source~$v^{*}$.}
\label{fig:two_obs}
\end{figure}

 This suggests a natural estimator: if $S \neq \emptyset$, let $\widehat{v}$ be a uniformly randomly chosen node from $S$ (note that $S$ is a measurable function of $G_{t_{1}}^{1}$ and $G_{t_{2}}^{2}$); if $S = \emptyset$ (this occurs when $\delta( \vs^{1}, \vs^{2}) \leq 1$), let $\wh{v}$ be an arbitrary node. Then, given $A_{12}$ and $S$, the conditional probability that $\widehat{v} = v^{*}$ is $1/|S|$ (note that $A_{12}$ implies that $|S| \geq 1$, as we argued above). 
 We have thus shown that 
 \[
  \p \left( \wh{v} = v^{*} \right) 
  \geq 
  \p \left( \wh{v} = v^{*} \, \middle| \, A_{12} \right) \p \left( A_{12} \right) 
  = \E \left[ 1/|S| \, \middle| \, A_{12} \right] \frac{d-1}{d}.
 \]
 To conclude, it suffices to show that $\left|S\right| \leq \min \left\{ t_{1} / 2, t_{2} / 2 \right\}$ whenever $A_{12}$ holds. 
 To see this, note that the intersection $P_{12} \cap V_{t_{1}}^{1}$ contains at most $t_{1} / 2$ nodes, since $G_{t_{1}}^{1}$ is a (closed) ball of radius $t_{1} / 2$ centered at $\vs^{1}$, the path $P_{12}$ starts at the virtual source $\vs^{1}$, and $\vs^{1}$ is not included in $P_{12}$. 
 Thus $|S| \leq \left| P_{12} \cap V_{t_{1}}^{1} \right| \leq t_{1} / 2$. 
 Similarly, $P_{12} \cap V_{t_{2}}^{2}$ contains at most $t_{2} / 2$ nodes, and the claim follows.

 Finally, we explain what changes when $t_{i}$ is odd for $i = 1$ and/or $i=2$. 
 If $G_{t_{i}}^{i}$ is a ball, then its center is $\vs_{t_{i}}^{i}$, so the adversary can again determine the virtual source at time $t_{i}$ and everything is unchanged. 
 If $G_{t_{i}}^{i}$ is not a ball, then it is symmetric about the edge connecting $\vs_{t_{i}-1}$ and $\vs_{t_{i}}$. 
 Thus the adversary can determine the set $\left\{ \vs_{t_{i} - 1}, \vs_{t_{i}} \right\}$. Connecting \emph{both} of these virtual sources with the other virtual source(s), we again obtain a path, where now at both ends of the path we have either one or two virtual sources. In any case, we can define $P_{12}$ analogously, where again the known virtual sources are not considered as part of $P_{12}$. The rest of the proof is unchanged. 
\end{proof}

\subsection{Maximum likelihood source estimation}\label{sec:MLE} 

In order to prove Theorem~\ref{thm:two_obs}\eqref{thm:two_obs_obfu}, 
we need to understand maximum likelihood source estimation. 
Here 
we discuss this 
for adaptive diffusions in general. 
Recall that an adaptive diffusion protocol is given by the probabilities 
$\alpha(t,h) \in [0,1]$, 
with $t \in \left\{ 2, 4, 6, \ldots \right\}$ 
and $h \in \left\{ 1, 2, 3, \ldots, t/2 \right\}$, 
which determine the distribution of the path of the virtual source 
$\left\{ \vs_{t} \right\}_{t \geq 0}$. 
Let $h_{t} := \delta \left( \vs_{t}, v^{*} \right)$ denote the graph distance between $\vs_{t}$ and $v^{*}$, and let $p(t,h) := \p \left( h_{t} = h \right)$ denote the distribution of $h_{t}$.

When determining the likelihood function 
$L(v) = \p \left( G_{t} \, \middle| \, v^{*} = v \right)$ 
we have to specify whether the value of $t$ is known or not 
(since it is not always possible to infer the value of $t$ from the observation~$G_{t}$). 
We assume in the following that $t$ is \emph{known}. 
Note that knowing $t$ can only help the adversary 
and hence any upper bounds on the success probability of the MLE under this assumption 
still hold without this assumption. 
Furthermore, the rumor source detection results (Theorem~\ref{thm:two_obs}\eqref{thm:two_obs_detect} and Theorem~\ref{thm:multi_obs}) hold regardless of whether we assume this or not. 
Finally, note that this assumption is also what is used in previous works~\cite{fanti2015spy,fanti2016irregular,fanti2017hide}.

We now determine the likelihood function 
$L(v) = \p \left( G_{t} \, \middle| \, v^{*} = v \right)$ 
for even $t$; 
it is similar for odd~$t$, but we leave this for later. 
First, we always have that $v^{*} \in V_{t} \setminus \left\{ \vs_{t} \right\}$, so $L(v) = 0$ if $v \notin V_{t} \setminus \left\{ \vs_{t} \right\}$. 
Next, since $G_{t}$ is a ball of radius~$t/2$ with center $\vs_{t}$, 
it is fully determined by the position of the virtual source, together with the time~$t$. 
It is important to note a key symmetry property of adaptive diffusion: 
all nodes at a particular distance from the virtual source are equally likely to have been the source. 
This is because the virtual source always moves to a uniformly randomly chosen neighbor away from the source. 
Thus the distribution of the virtual source is completely determined by the distribution of~$h_{t}$. 
Altogether, since there are $d(d-1)^{h-1}$ nodes at distance $h \geq 1$ from a particular vertex, we obtain that 
\begin{equation}\label{eq:likelihood_even}
 L(v) =  \frac{1}{d(d-1)^{\delta \left( v, \vs_{t} \right)-1}} p \left( t , \delta \left( v, \vs_{t} \right) \right) \mathbf{1}_{\left\{ v \in V_{t} \setminus \left\{ \vs_{t} \right\} \right\}}.
\end{equation}

Now assume that we have $k$ independent observations of infected subgraphs, $G_{t_{i}}^{i} = \left( V_{t_{i}}^{i}, E_{t_{i}}^{i} \right)$ for $i \in \left\{ 1, \ldots, k \right\}$, started from a fixed source $v^{*}$. 
Assume also, for now, that all the times $t_{1}, \ldots, t_{k}$ are even. 
Then, by independence, the likelihood function is 
\[
 L(v) = \left( \frac{d-1}{d} \right)^{k} \prod_{i=1}^{k} p \left(t_{i}, X_{i}(v) \right) \cdot \left( d-1 \right)^{-X_{i}(v)}  
 \mathbf{1}_{\left\{ v \in \bigcap_{i=1}^{k} \left( V_{t_{i}}^{i} \setminus \left\{ \vs_{t_{i}}^{i} \right\} \right) \right\}},
\]
where we have introduced 
\begin{equation}\label{eq:Xi}
 X_{i} (v) := \delta \left( v, \vs_{t_{i}}^{i} \right)
\end{equation}
for convenience (and recall that we can determine $\vs_{t_{i}}^{i}$, and thus also $X_{i}(v)$, from $G_{t_{i}}^{i}$). 
By taking logarithms, we obtain that the MLE satisfies 
\begin{equation}\label{eq:mle}
 \wh{v}_{\ML} \in \argmax_{v \in \bigcap_{i=1}^{k} \left( V_{t_{i}}^{i} \setminus \left\{ \vs_{t_{i}}^{i} \right\} \right)} 
 \sum_{i=1}^{k} \left\{ \log p \left( t_{i}, X_{i}(v) \right) - X_{i}(v) \log(d-1) \right\}.
\end{equation}

We now turn to determining the likelihood function 
$L(v) = \p \left( G_{t} \, \middle| \, v^{*} = v \right)$ 
for odd~$t$. 
This is similar to the case of even~$t$, but there are slight differences. 
Specifically, there are two cases to distinguish: 
when $t$ is odd, 
the observed graph $G_{t}$ is either a ball or it is not (in which case it consists of two balanced rooted trees of depth $(t-1)/2$, whose roots are connected by an edge). 

The former case occurs when the virtual source does not move at time $t-1$, that is, when 
$\vs_{t-1} = \vs_{t}$. 
In this case, we know that $G_{t-1} = G_{t}$, 
we know the likelihood of $G_{t-1}$ (which is given by~\eqref{eq:likelihood_even} with $t$ replaced by $t-1$), 
and in order to obtain the likelihood of $G_{t}$ we have to multiply this by the probability that $\vs_{t-1} = \vs_{t}$, 
which is $\alpha(t-1,X(v))$, where $X(v) = \delta(v,\vs_{t-1}) = \delta(v,\vs_{t})$. 

In the latter case, when $G_{t}$ is not a ball, we know that the virtual source moved at time $t-1$. 
Furthermore, we can determine the set $\left\{ \vs_{t-1}, \vs_{t} \right\}$, as these two vertices are connected by the central edge of $G_{t}$. 
In this case, we define $X(v) := \min \left\{ \delta(v,\vs_{t-1}), \delta(v,\vs_{t}) \right\}$ (note that $X(v)$ can be determined from $G_{t}$). 
In order to obtain the likelihood of $G_{t}$ we have to multiply the expression in~\eqref{eq:likelihood_even} (with $t$ replaced by $t-1$ and $\delta(v,\vs_{t})$ replaced with $\min \left\{ \delta(v,\vs_{t-1}), \delta(v,\vs_{t}) \right\}$) 
with the probability that $\vs_{t-1} \neq \vs_{t}$, 
which is $1 - \alpha(t-1,X(v))$. 

Altogether, when $t$ is odd we have that the likelihood function is 
\begin{equation}\label{eq:likelihood_odd}
 L(v) = 
 \begin{cases}
  \frac{1}{d(d-1)^{X(v)-1}} p \left( t - 1 , X(v) \right) \alpha(t-1,X(v)) \mathbf{1}_{\left\{ v \in V_{t} \setminus \left\{ \vs_{t} \right\} \right\}} 
  &\text{ if } G_{t} \text{ is a ball}, \\
  \frac{1}{d(d-1)^{X(v)-1}} p \left( t - 1 , X(v) \right) \left\{ 1 - \alpha(t-1,X(v)) \right\} \mathbf{1}_{\left\{ v \in V_{t} \setminus \left\{ \vs_{t-1}, \vs_{t} \right\} \right\}} 
  &\text{ otherwise},
 \end{cases}
\end{equation}
where $X(v) := \min \left\{ \delta(v,\vs_{t-1}), \delta(v,\vs_{t}) \right\}$ (note that this definition of $X(v)$ works for both cases; when $G_{t}$ is a ball then $\vs_{t-1} = \vs_{t}$ and hence $X(v) = \delta(v,\vs_{t-1}) = \delta(v,\vs_{t})$).

\subsection{Source obfuscation --- even times}\label{sec:two_obs_obfu_even}

We are now ready to prove Theorem~\ref{thm:two_obs}(\ref{thm:two_obs_obfu}). 
We first prove this when both $t_{1}$ and $t_{2}$ are even. 
This is done in order to highlight the key insights and computations. 
The remaining cases (when one or both of $t_{1}$ and $t_{2}$ are odd) 
are similar but more complicated and hence are deferred to Appendix~\ref{sec:two_obs_obfu_odd}.

\begin{proof}[Proof of Theorem~\ref{thm:two_obs}(\ref{thm:two_obs_obfu}) when $t_{1}$ and $t_{2}$ are both even.] 
 We may assume in the following that $t_{1}, t_{2} \geq 4$, since when $\min \left\{ t_{1}, t_{2} \right\} = 2$ then the right hand side of~\eqref{eq:two_obs_obfu} is greater than $1$ and thus the statement is vacuously true. 

 Consider the adaptive diffusion protocol---which we term the \emph{uniform protocol} $\mathcal{U}$ for reasons to become clear---given by the probabilities 
 \begin{equation}\label{eq:alpha_uniform}
  \alpha_{\mathcal{U}} (t,h) := \frac{t-2h+2}{t+2}
 \end{equation}
 for $t \in \left\{ 2, 4, 6, \ldots \right\}$ and $h \in \left\{ 1, 2, \ldots, t/2 \right\}$. 
 This is the same protocol introduced by Fanti~et~al.~\cite{fanti2015spy} to achieve perfect obfuscation from a single snapshot on $\mathbb{Z}$---the difference is that here we use this protocol regardless of the degree $d$. 
 The important property of this protocol is that the distance 
 $h_{t} := \delta \left( \vs_{t}, v^{*} \right)$ between the virtual source $\vs_{t}$ and the true source $v^{*}$ is \emph{uniformly distributed} over the set of possible values $\left\{ 1, 2, \ldots, t/2 \right\}$, for all even $t$. 
 That is, for all even $t$ we have that 
 \begin{equation}\label{eq:p_uniform}
  p_{\mathcal{U}} (t,h) = \frac{2}{t} \mathbf{1}_{\left\{ h \in \left\{ 1, 2, \ldots, t/2 \right\} \right\}}.
 \end{equation}
 This can be shown by induction; we leave the details to the reader. 
 
 We now turn to analyzing the maximum likelihood estimator of the source, $\wh{v}_{\ML}$, given two independent snapshots $G_{t_{1}}^{1}$ and $G_{t_{2}}^{2}$. 
 Recall that we assume now that $t_{1}$ and $t_{2}$ are both even. 
 The adversary can thus determine the two virtual sources 
 $\vs^{1} \equiv \vs_{t_{1}}^{1}$ and $\vs^{2} \equiv \vs_{t_{2}}^{2}$. 
 By plugging in~\eqref{eq:p_uniform} into~\eqref{eq:mle}, 
 we obtain that the MLE satisfies 
 \[
  \wh{v}_{\ML} \in \argmin_{v \in V_{t_{1}}^{1} \cap V_{t_{2}}^{2} \setminus \left\{ \vs^{1}, \vs^{2} \right\}} 
  \left( X_{1}(v) + X_{2}(v) \right), 
 \]
 where recall from~\eqref{eq:Xi} that 
 $X_{i} (v) = \delta(v,\vs^{i})$ 
 for $i \in \left\{ 1, 2 \right\}$. 
 In words, 
 the maximum likelihood estimator minimizes the sum of the distances to the two virtual sources, over all nodes that were infected in both diffusions, excluding the two virtual sources. 
 
 To understand the MLE better we distinguish three cases, the last one being the most important: 
 \begin{enumerate}[(1)]
  \item\label{case:MLE_even_even_1} If $\vs^{1} = \vs^{2}$, then $\wh{v}_{\ML}$ chooses a neighbor of $\vs^{1} = \vs^{2}$ uniformly at random. 
  \item\label{case:MLE_even_even_2} If $\delta( \vs^{1}, \vs^{2}) = 1$, then $\wh{v}_{\ML}$ chooses a neighbor of the set $\left\{ \vs^{1}, \vs^{2} \right\}$ uniformly at random.\footnote{Here we use that $t_{1},t_{2} \geq 4$, to ensure that all neighbors of the set $\left\{ \vs^{1}, \vs^{2} \right\}$ are in $V_{t_{1}}^{1} \cap V_{t_{2}}^{2}$.}
  \item If $\delta( \vs^{1}, \vs^{2}) \geq 2$, then $X_{1}(v) + X_{2}(v)$ is minimized when $v$ is on the shortest path between $\vs^{1}$ and $\vs^{2}$. 
 Let $P_{12}$ denote the set of vertices that are on the shortest path between $\vs^{1}$ and $\vs^{2}$, excluding $\vs^{1}$ and $\vs^{2}$. 
 Furthermore, define the set $S := P_{12} \cap V_{t_{1}}^{1} \cap V_{t_{2}}^{2}$ 
 and note that when $\delta( \vs^{1}, \vs^{2}) \geq 2$, then $S$ is nonempty, because the vertex in $P_{12}$ that is closest to $v^{*}$ is always in~$S$. 
 We have thus argued that the likelihood function is maximized at the nodes in $S$ and 
 thus the maximum likelihood estimator $\wh{v}_{\ML}$ chooses a node from $S$ uniformly at random. 
 \end{enumerate}
 Note that $\wh{v}_{\ML}$ is (essentially) the same as the estimator $\wh{v}$ introduced in the proof of part~(\ref{thm:two_obs_detect}) of Theorem~\ref{thm:two_obs}. 
 
 Let $A_{12}$ denote the event that $\vs^{1}$ and $\vs^{2}$ are in different subtrees away from $v^{*}$ (see~\eqref{eq:A12}),  
 and note that $\p \left( A_{12} \right) = (d-1)/d$. 
 Observe that if the event $A_{12}$ holds, then necessarily $\delta( \vs^{1}, \vs^{2} ) \geq 2$, 
 and hence the first two cases above imply that $A_{12}$ does not hold. 
 To compute the probability that the MLE $\wh{v}_{\ML}$ is correct, 
 we may condition on whether or not $A_{12}$ holds: 
 \begin{align}
  \p \left( \wh{v}_{\ML} = v^{*} \right) 
 &= \p \left( \wh{v}_{\ML} = v^{*} \, \middle| \, A_{12} \right) \p \left( A_{12} \right) 
 + \p \left( \wh{v}_{\ML} = v^{*} \, \middle| \, A_{12}^{C} \right) \p \left( A_{12}^{C} \right) \notag\\
 &= \p \left( \wh{v}_{\ML} = v^{*} \, \middle| \, A_{12} \right) \cdot \frac{d-1}{d}
 + \p \left( \wh{v}_{\ML} = v^{*} \, \middle| \, A_{12}^{C} \right) \cdot \frac{1}{d}. \label{eq:MLE_total_prob}
 \end{align}

 Let us now turn to computing $\p \left( \wh{v}_{\ML} = v^{*} \, \middle| \, A_{12}^{C} \right)$. 
 There are two cases when the MLE can be correct, given that $A_{12}$ does not hold. 
 First, corresponding to Case~\eqref{case:MLE_even_even_1} above: 
 if $\vs^{1} = \vs^{2}$ and $\delta(v^{*},\vs^{1}) = \delta(v^{*},\vs^{2}) = 1$, 
 then the MLE is correct with probability $1/d$. 
 Second, corresponding to Case~\eqref{case:MLE_even_even_2} above: 
 if $\delta(v^{*},\vs^{1}) = \delta(\vs^{1},\vs^{2}) = 1$ 
 or if $\delta(v^{*},\vs^{2}) = \delta(\vs^{1},\vs^{2}) = 1$, 
 then the MLE is correct with probability $1/(2d-2)$. 
 If $\delta( \vs^{1}, \vs^{2} ) \geq 2$ and $A_{12}$ does not hold, 
 then $\wh{v}_{\ML} \neq v^{*}$. 
 Putting these together and using~\eqref{eq:p_uniform} we obtain that 
 \begin{equation}\label{eq:MLE_A12C}
  \p \left( \wh{v}_{\ML} = v^{*} \, \middle| \, A_{12}^{C} \right) 
  = \frac{2}{t_{1}} \cdot \frac{2}{t_{2}} \cdot \frac{1}{d} + 2 \cdot \frac{2}{t_{1}} \cdot \frac{2}{t_{2}} \cdot \frac{1}{2d-2} 
  = \frac{4}{t_{1} t_{2}} \left( \frac{1}{d} + \frac{1}{d-1} \right).
 \end{equation}

 We now turn to computing 
 $\p \left( \wh{v}_{\ML} = v^{*} \, \middle| \, A_{12} \right)$. 
 Given $A_{12}$ and $S$, the conditional probability that $\wh{v}_{\ML} = v^{*}$ is $1/|S|$. We thus have that 
 \begin{equation}\label{eq:exp1/S}
  \p \left( \wh{v}_{\ML} = v^{*} \, \middle| \, A_{12} \right) 
  = \E \left[ 1 / |S| \, \middle| \, A_{12} \right].
 \end{equation}
 On the event $A_{12}$, we can express $|S|$ as a function of $X_{1}(v^{*})$ and $X_{2}(v^{*})$ as follows. First, the set $S$ always contains $v^{*}$ when $A_{12}$ holds. Next, there are $X_{1}(v^{*}) - 1$ nodes on the path $P_{12}$ between $v^{*}$ and $\vs^{1}$. However, only the $t_{2}/2 - X_{2}(v^{*})$ nodes of these that are closest to $v^{*}$ are in $V_{t_{2}}^{2}$ as well. 
 Similarly, there are $X_{2}(v^{*}) - 1$ nodes on the path $P_{12}$ between $v^{*}$ and $\vs^{2}$, but only the $t_{1}/2 - X_{1}(v^{*})$ nodes of these that are closest to $v^{*}$ are in $V_{t_{1}}^{1}$ as well. 
 Altogether, on the event $A_{12}$ we have that 
 \begin{equation}\label{eq:S}
  |S| = 1 + \min \left\{ X_{1}(v^{*}) - 1, t_{2}/2 - X_{2}(v^{*}) \right\} + \min \left\{ X_{2}(v^{*}) - 1, t_{1}/2 - X_{1}(v^{*}) \right\}.
 \end{equation}
 Recall from~\eqref{eq:p_uniform} that $X_{i}(v^{*})$ is uniformly distributed on $\left\{ 1, 2, \ldots, t_{i} / 2 \right\}$, for $i \in \left\{ 1,2 \right\}$. Moreover, $X_{1}(v^{*})$ and $X_{2}(v^{*})$ are independent. Both of these statements hold conditioned on $A_{12}$. 
 Therefore, plugging in~\eqref{eq:S} into~\eqref{eq:exp1/S} and writing out the expectation we obtain that 
 \begin{equation}\label{eq:double_sum}
  \p \left( \wh{v}_{\ML} = v^{*} \, \middle| \, A_{12} \right) 
  = \frac{1}{st} \sum_{j=1}^{s} \sum_{\ell=1}^{t} \frac{1}{1+ \min \left\{ j-1,t-\ell \right\} + \min \left\{ \ell - 1, s - j \right\}},
 \end{equation}
 where we have introduced 
 $s := \min \left\{ t_{1}, t_{2} \right\} / 2$ 
 and 
 $t := \max \left\{ t_{1}, t_{2} \right\} / 2$ 
 in order to abbreviate notation.  
 With this notation, we can write $|S|$ from~\eqref{eq:S} more succintly by breaking things into three cases, as follows: 
 \begin{itemize}
  \item If $X_{1}(v^{*}) + X_{2}(v^{*}) \leq s + 1$, 
  then $|S| = X_{1}(v^{*}) + X_{2}(v^{*}) - 1$. 
  \item If $s + 1 < X_{1}(v^{*}) + X_{2}(v^{*}) \leq t + 1$, 
  then $|S| = s$. 
  \item If $t + 1 < X_{1}(v^{*}) + X_{2}(v^{*})$, 
  then $|S| = 1 + s + t - \left( X_{1}(v^{*}) + X_{2}(v^{*}) \right)$. 
 \end{itemize}
 Accordingly, we can break the sum in~\eqref{eq:double_sum} into three parts. 
 Let $\mathcal{I} := \left\{ (j,\ell) : 1 \leq j \leq s, 1 \leq \ell \leq t \right\}$ 
 denote the index set over which we take the sum in~\eqref{eq:double_sum}. 
 We can write it as the disjoint union 
 $\cI = \cI_{1} \cup \cI_{2} \cup \cI_{3}$, 
 where 
 $\cI_{1} := \left\{ (j,\ell) \in \cI : j + \ell \leq s+1 \right\}$, 
 $\cI_{2} := \left\{ (j,\ell) \in \cI : s + 1 < j + \ell \leq t + 1 \right\}$, 
 and 
 $\cI_{3} := \left\{ (j,\ell) \in \cI : t + 1 < j + \ell \right\}$. 
 We now consider the index sets $\cI_{1}$, $\cI_{2}$, and $\cI_{3}$ separately. 
 
 First, suppose that $m \in \left\{ 2, 3, \ldots, s + 1 \right\}$. 
 There are $m-1$ pairs of indices $(j,\ell) \in \cI_{1}$ such that $j + \ell = m$. For each such index pair, the fraction in~\eqref{eq:double_sum} is equal to $1/(m-1)$. 
 Since there are $s$ different values of $m$, the sum over the index set $\cI_{1}$ is equal to $s$.
 
 Next, observe that $\left| \cI_{2} \right| = s(t-s)$. For every $(j,\ell) \in \cI_{2}$, the fraction in~\eqref{eq:double_sum} is $1/s$. 
 Therefore the sum over the index set $\cI_{2}$ is equal to $s(t-s) / s = t-s$. 
 
 Finally, suppose that $m \in \left\{ t + 2, \ldots, t + s \right\}$. 
 There are $1 + s + t - m$ pairs of indices $(j,\ell) \in \cI_{3}$ such that $j + \ell = m$. For each such index pair, the fraction in~\eqref{eq:double_sum} is equal to $1/(1+s+t-m)$. 
 Since there are $s-1$ different values of $m$, the sum over the index set $\cI_{3}$ is equal to $s-1$. 
 
 Putting together the previous three paragraphs, we thus have that 
 \[
  \sum_{j=1}^{s} \sum_{\ell=1}^{t} \frac{1}{1+ \min \left\{ j-1,t-\ell \right\} + \min \left\{ \ell - 1, s - j \right\}} 
  = s + t - 1.
 \]
 Plugging this back into~\eqref{eq:double_sum}, and returning to the notation of $t_{1}$ and $t_{2}$, we obtain that 
 \begin{equation}\label{eq:MLE_A12}
  \p \left( \wh{v}_{\ML} = v^{*} \, \middle| \, A_{12} \right)
  = \frac{s+t-1}{st} 
  = \frac{2 t_{1} + 2 t_{2} - 4}{t_{1} t_{2}}.
 \end{equation}
 Putting together~\eqref{eq:MLE_total_prob},~\eqref{eq:MLE_A12C}, and~\eqref{eq:MLE_A12}, we have obtained that 
 \[
  \p \left( \wh{v}_{\ML} = v^{*} \right)
  = \frac{d-1}{d} \cdot \frac{2 t_{1} + 2 t_{2} - 4}{t_{1} t_{2}} + \frac{1}{d} \cdot \frac{4}{t_{1} t_{2}} \left( \frac{1}{d} + \frac{1}{d-1} \right) 
  < \frac{d-1}{d} \cdot \frac{2 t_{1} + 2 t_{2}}{t_{1} t_{2}},
 \]
 where we used that $1/d + 1/(d-1) < 1$. 
 Using that $2t_{1} + 2t_{2} \leq 4 \max \left\{ t_{1}, t_{2} \right\}$, 
 we obtain the bound in~\eqref{eq:two_obs_obfu}, 
 when $t_{1}$ and $t_{2}$ are both even. 
\end{proof}

\section{Local spreading vs.\ source obfuscation} \label{sec:local_spreading_proofs} 

In this section we prove Theorem~\ref{thm:local_spreading}. 
Recall the notation we introduced in previous sections, which we use here as well. 
In particular, $h_{t} := \delta \left( \vs_{t}, v^{*} \right)$ denotes the graph distance between $\vs_{t}$ and $v^{*}$ and $p(t,h) := \p \left( h_{t} = h \right)$. 
We will also use the elementary inequalities 
\begin{equation}\label{eq:Nt}
\left( d - 1 \right)^{t/2} 
\leq N_{t} 
\leq \frac{d}{d-2} \left( d - 1 \right)^{t/2}.
\end{equation}

\begin{proof}[Proof of Theorem~\ref{thm:local_spreading}]
Our starting observation is that, due to the definition of adaptive diffusion protocols, we have that  
\begin{equation}\label{eq:Rt_identity}
R_{t} = \frac{t}{2} - h_{t}.
\end{equation}
Thus in order to understand $R_{t}$ it is equivalent to understand $h_{t}$. 

We first turn to part (a) of the theorem. We described the likelihood function in Section~\ref{sec:MLE}, see~\eqref{eq:likelihood_even} in particular, from which it follows that 
\begin{equation}\label{eq:MLE_correct_prob}
\p \left( \wh{v}_{\ML} = v^{*} \right) 
= 
\max_{1 \leq h \leq t / 2} \frac{p(t,h)}{d\left( d - 1 \right)^{h - 1}}.
\end{equation}
The assumption~\eqref{eq:poly_obfuscation} thus implies that 
\begin{equation}\label{eq:pth_bound}
p(t,h) 
\leq 
\frac{C d \left( d - 1 \right)^{h - 1}}{N_{t}^{\gamma}} 
\leq 
C d \left( d - 1 \right)^{h - \gamma t / 2 - 1} 
\end{equation}
for all $1 \leq h \leq t / 2$, where in the second inequality we used~\eqref{eq:Nt}. 
Now define 
\[
m_{t} := \frac{\gamma t}{2} - \frac{\log \left( C t \right)}{\log \left( d - 1 \right)} - 1. 
\]
We then have that 
\begin{align*}
\p \left( h_{t} \leq m_{t} \right) 
&\leq \sum_{h = 1}^{\left\lfloor m_{t} \right\rfloor} C d \left( d - 1 \right)^{h - \gamma t / 2 - 1} 
= C d \left( d - 1 \right)^{-\gamma t / 2} \frac{\left( d - 1 \right)^{\left\lfloor m_{t} \right\rfloor} - 1}{d-2} \\
&\leq C d \left( d - 1 \right)^{m_{t} - \gamma t / 2} 
= \frac{d}{d-1} \cdot \frac{1}{t} \leq \frac{2}{t}.
\end{align*}
In particular, we thus have that 
$\p \left( h_{t} > m_{t} \right) \geq 1 - 2 / t$. 
Therefore 
\[
\E \left[ h_{t} \right] 
\geq m_{t} \p \left( h_{t} > m_{t} \right) 
\geq m_{t} \left(  1 - \frac{2}{t} \right)
\geq 
\frac{\gamma t}{2} - \frac{\log \left( C t \right)}{\log \left( d - 1 \right)} - 1 - \gamma.
\]
Now using~\eqref{eq:Rt_identity} we have that 
\[
\E \left[ R_{t} \right] 
= \frac{t}{2} - \E \left[ h_{t} \right] 
\leq \left( 1 - \gamma \right) \frac{t}{2} + \frac{\log \left( C t \right)}{\log \left( d - 1 \right)} + 1 + \gamma,
\]
which concludes the proof of part (a) of the theorem.

We now turn to part (b) of the theorem. Consider the adaptive diffusion protocol defined as follows: 
\begin{itemize}
\item For $t \leq 2 / \gamma$, let $\alpha \left( t, h \right) = 1$ for all $1 \leq h \leq t/2$. 
\item For $t > 2 / \gamma$, let 
$\alpha \left( t, h \right) = 1$ if 
$\left\lfloor \gamma t / 2 \right\rfloor = \left\lfloor \gamma \left( t / 2 + 1 \right) \right\rfloor$ 
and let 
$\alpha \left( t, h \right) = 0$ otherwise. 
\end{itemize}
This construction guarantees that for all even $t$ we have that 
$h_{t} = 1$ if $t \leq 2 / \gamma$, 
while for even $t > 2 / \gamma$ we have that 
\[
h_{t} = \left\lfloor \gamma t / 2 \right\rfloor
\]
deterministically. 
Thus by~\eqref{eq:Rt_identity} we have, for all even $t$ satisfying $t > 2 / \gamma$, that 
\[
R_{t} = t / 2 - h_{t} = t / 2 - \left\lfloor \gamma t / 2 \right\rfloor 
\geq \left( 1 - \gamma \right) t / 2.
\]
On the other hand, by~\eqref{eq:MLE_correct_prob} we have, for all even $t$ satisfying $t > 2 / \gamma$, that 
\[
\p \left( \wh{v}_{\ML} = v^{*} \right) 
= \frac{1}{d \left( d - 1 \right)^{\left\lfloor \gamma t / 2 \right\rfloor - 1}} 
\leq \frac{1}{d \left( d - 1 \right)^{\gamma t / 2 - 2}}.
\]
From~\eqref{eq:Nt} it follows that 
$\left( d - 1 \right)^{-t/2} \leq \left( d / \left( d - 2 \right) \right) / N_{t}$ and so 
\[
\p \left( \wh{v}_{\ML} = v^{*} \right) 
\leq \frac{\left( d-1 \right)^{2}}{d} \left( \frac{d}{d-2} \right)^{\gamma} \frac{1}{N_{t}^{\gamma}} 
\leq \frac{ \left( d-1 \right)^{2}}{d-2} \cdot \frac{1}{N_{t}^{\gamma}} 
\leq \frac{2 \left( d - 1 \right)}{N_{t}^{\gamma}},
\]
where in the second inequality we used that $\gamma \leq 1$ 
and in the third inequality we used that $d \geq 3$. 
\end{proof}

\section{Discussion} \label{sec:discussion} 

The main message of this work is that while adaptive diffusion protocols can hide the source from a snapshot adversary, they are ineffective when the adversary has access to multiple independent snapshots. 
The main question raised by our work is whether there exist other diffusion protocols that can obfuscate the source against such an adversary.

We make several simplifying assumptions in this work, which are important to discuss and study further. 
First, we assume throughout that the underlying graph is the infinite $d$-regular tree $\mathbb{T}_{d}$ (with $d\geq 3$), which is not a realistic model of real-world (social) networks. It is therefore important to study the questions we consider here on other underlying graphs, for instance, on more realistic models as well as on real-world social networks. 
We conjecture that our qualitative conclusions will carry over to more realistic settings, which motivates studying such a simplified setting. 

We also assume that the adversary observes multiple \emph{independent} snapshots. 
Previous work has considered multiple \emph{sequential} snapshots (in time): 
Wang~et~al.~\cite{wang2014} show that additional sequential snapshots cannot improve detectability under the SI model, 
while Fanti~et~al.~\cite{fanti2017hide} show that they can improve the detection probability at most logarithmically for adaptive diffusions. 
On the other hand, Cai~et~al.~\cite{cai2018information} show that multiple sequential snapshots can help detection when the spreading rates are heterogeneous, both theoretically and on Twitter data. 
As mentioned before, Wang~et~al.~\cite{wang2014} show that multiple independent snapshots help significantly with detection under the SI model, and our results extend this to the family of adaptive diffusions. 
An interesting question is what happens in between, when the adversary observes multiple correlated snapshots (that are not necessarily sequential observations of the same diffusion). 
In particular, can spreading protocols take advantage of correlation in order to obfuscate the source against an adversary who observes multiple snapshots?

This question is related to local spreading as follows. 
An adversary who observes multiple snapshots can always use the following simple source estimator: pick a node uniformly at random among those which are infected in each snapshot. The probability of success of this estimator is the inverse of the size of the set of nodes which are infected in each snapshot. 
To minimize this, a spreading protocol should aim to maximize the size of this set. 
This can be done by having highly correlated snapshots, or by having a large amount of local spreading (which we have discussed in Sections~\ref{sec:local_spreading} and~\ref{sec:local_spreading_proofs}). 
In any case, we conjecture that if there is a reasonable amount of independence among the observed snapshots, then the results will be qualitatively similar to those which we have obtained.

There are also many natural variations on what information the adversary has access to. 
For instance, Fanti~et~al.~\cite{fanti2016metadata,fanti2017hide} consider a spy-based model, where a fraction of nodes are corrupted and continuously monitor metadata such as message timestamps; they also consider a mixed model using both spies and a snapshot. 
Other information models include 
having a snapshot and additional relative information about the infection times of
a fraction of node pairs~\cite{kumar2017temporally}, 
having partial infection timestamps~\cite{tang2018estimating}, 
and having a noisy time series of observations~\cite{sridhar2019sequential,sridhar2020bayes}. 
Understanding how our results change under these different information models of adversaries is a natural question for future work.

Further avenues to explore related to our work include game-theoretic formulations~\cite{luo2016infection}, optimal sensor/spy placement~\cite{spinelli2017general}, confidence sets for the source~\cite{kl16confidence}, and multiple rumor sources~\cite{spencer2015impossibility}. 

In conclusion, most results in this space---including ours in this work---are positive in terms of rumor source detection, and thus highlight major difficulties with guaranteeing anonymity for the source of a message in a social network. 
As surveillance techniques grow ever more prominent in society, 
this emphasizes the need for further research, with the hope of ultimately providing robust anonymity guarantees.


\bibliographystyle{abbrv}
\bibliography{bib}


\appendix 

\section{Proof of Theorem~\ref{thm:two_obs}(\ref{thm:two_obs_obfu}) when one or both of $t_{1}$ and $t_{2}$ are odd}\label{sec:two_obs_obfu_odd} 

Here we prove Theorem~\ref{thm:two_obs}(\ref{thm:two_obs_obfu}) when one or both of $t_{1}$ and $t_{2}$ are odd. 
This is similar to the proof presented in Section~\ref{sec:two_obs_obfu_even} when both $t_{1}$ and $t_{2}$ are even, but there are more cases to consider (especially when both $t_{1}$ and $t_{2}$ are odd). 
To keep things clear, 
we also separate below the proof when one of $t_{1}$ and $t_{2}$ is even and the other is odd, 
from the proof when both $t_{1}$ and $t_{2}$ are odd.

\subsection{When one of $t_{1}$ and $t_{2}$ is even and the other is odd} 

\begin{proof}[Proof of Theorem~\ref{thm:two_obs}(\ref{thm:two_obs_obfu}) when one of $t_{1}$ and $t_{2}$ is even and the other is odd.]
W.l.o.g.\ 
suppose that $t_{1}$ is even and $t_{2}$ is odd. We may (and will) assume that $t_{1} \geq 4$ and that $t_{2} \geq 5$, since if $\min \left\{ t_{1}, t_{2} \right\} \leq 3$, then the right hand side of~\eqref{eq:two_obs_obfu} is greater than $1$ and thus the statement is vacuously true. 

Recall that we are analyzing the MLE for the \emph{uniform protocol} given by the probabilities in~\eqref{eq:alpha_uniform} 
and for which we have that~\eqref{eq:p_uniform} holds. 
We start with a few observations. 
First, $G_{t_{1}}^{1}$ is a ball of radius $t_{1} / 2$ around $\vs^{1} \equiv \vs_{t_{1}}^{1}$ and hence we can determine $\vs^{1}$. 
We may thus define $X_{1}(v) := \delta \left( v,\vs^{1} \right)$. 
Next, the observation $G_{t_{2}}^{2}$ is either a ball or it is not. 
If $G_{t_{2}}^{2}$ is a ball, then it is a ball of radius $(t_{2}-1)/2$ around $\vs_{t_{2} - 1}^{2} = \vs_{t_{2}}^{2}$ and hence we can determine $\vs_{t_{2} - 1}^{2} = \vs_{t_{2}}^{2}$. 
If $G_{t_{2}}^{2}$ is not a ball, then its central edge is $\left\{ \vs_{t_{2} - 1}^{2}, \vs_{t_{2}}^{2} \right\}$ and hence we can determine the set $\left\{ \vs_{t_{2} - 1}^{2}, \vs_{t_{2}}^{2} \right\}$. 
In any case, we may thus define 
$X_{2}(v) := \min \left\{ \delta \left( v, \vs_{t_{2} - 1}^{2} \right), \delta \left( v, \vs_{t_{2}}^{2} \right) \right\}$ 
and note that the function $X_{2} : V \mapsto \R$ is determined by the observation $G_{t_{2}}^{2}$.

With this notation, 
using the expressions~\eqref{eq:likelihood_even} and~\eqref{eq:likelihood_odd}, 
the independence of the two observations, 
and substituting the expressions in~\eqref{eq:alpha_uniform} and~\eqref{eq:p_uniform}, 
we obtain that the likelihood function is the following: 
\begin{equation*}\label{eq:likelihood_even_and_odd}
 L(v) = 
 \begin{cases}
  \left( \frac{d-1}{d} \right)^{2} \cdot \frac{2}{t_{1}} \cdot \frac{2}{t_{2} - 1} \cdot \left( d - 1 \right)^{- \left( X_{1}(v) + X_{2}(v) \right)} \cdot \frac{t_{2}+1 - 2 X_{2}(v)}{t_{2} + 1} \cdot \mathbf{1}_{\left\{ v \in V_{t_{1}}^{1} \cap V_{t_{2}}^{2} \setminus \left\{ \vs_{t_{1}}^{1}, \vs_{t_{2}}^{2} \right\} \right\}} 
  &\text{ if } G_{t_{2}}^{2} \text{ is a ball}, \\
  \left( \frac{d-1}{d} \right)^{2} \cdot \frac{2}{t_{1}} \cdot \frac{2}{t_{2} - 1} \cdot \left( d - 1 \right)^{- \left( X_{1}(v) + X_{2}(v) \right)} \cdot \frac{2 X_{2}(v)}{t_{2}+1} \cdot \mathbf{1}_{\left\{ v \in V_{t_{1}}^{1} \cap V_{t_{2}}^{2} \setminus \left\{ \vs_{t_{1}}^{1}, \vs_{t_{2}}^{2}, \vs_{t_{2}-1}^{2} \right\} \right\}}
  &\text{ otherwise}.
 \end{cases}
\end{equation*}
The first three factors in the expressions above do not depend on $v$ and hence do not matter for the MLE. 
By taking logarithms and pulling out a minus sign, we obtain that the MLE satisfies the following: 
\begin{equation}\label{eq:MLE_even_and_odd}
 \wh{v}_{\ML} \in 
 \begin{cases}
  \argmin\limits_{v \in V_{t_{1}}^{1} \cap V_{t_{2}}^{2} \setminus \left\{ \vs_{t_{1}}^{1}, \vs_{t_{2}}^{2} \right\}} 
  \left[ \left( X_{1}(v) + X_{2}(v) \right) \log(d-1) - \log \left( t_{2} + 1 - 2 X_{2} (v) \right) \right] 
  &\text{ if } G_{t_{2}}^{2} \text{ is a ball}, \\
  \argmin\limits_{v \in V_{t_{1}}^{1} \cap V_{t_{2}}^{2} \setminus \left\{ \vs_{t_{1}}^{1}, \vs_{t_{2}}^{2}, \vs_{t_{2}-1}^{2} \right\}} 
  \left[ \left( X_{1}(v) + X_{2}(v) \right) \log(d-1) - \log X_{2}(v) \right] 
  &\text{ otherwise}.
 \end{cases}
\end{equation}

To understand the MLE in~\eqref{eq:MLE_even_and_odd} better, we distinguish six cases. 
These are based on whether $X_{2}(\vs^{1})$ is $0$, $1$, or at least $2$, 
and whether or not $G_{t_{2}}^{2}$ is a ball. 
Compared to the case when both $t_{1}$ and $t_{2}$ are even (see Section~\ref{sec:two_obs_obfu_even}) we have twice as many cases to consider because $G_{t_{2}}^{2}$ can be a ball or not. 
In all of the six cases there is a simple description of the MLE:
\begin{enumerate}[(1)]
 \item\label{case:even_odd_ball_0} 
 \emph{$G_{t_{2}}^{2}$ is a ball and $X_{2}(\vs^{1}) = 0$.} 
 In this case $\vs^{1} = \vs_{t_{2}}^{2} = \vs_{t_{2}-1}^{2}$ 
 and both terms in~\eqref{eq:MLE_even_and_odd} 
 ($\left( X_{1}(v) + X_{2}(v) \right) \log(d-1)$ and $- \log \left( t_{2} + 1 - 2 X_{2} (v) \right)$) 
 are minimized by the neighbors of~$\vs^{1}$. 
 Therefore $\wh{v}_{\ML}$ chooses a neighbor of $\vs^{1}$ uniformly at random. 
 
 \item\label{case:even_odd_notball_0} 
 \emph{$G_{t_{2}}^{2}$ is not a ball and $X_{2}(\vs^{1}) = 0$.} 
 In this case $\vs_{t_{2} - 1}^{2}$ and $\vs_{t_{2}}^{2}$ are neighbors 
 and $\vs^{1} \in \left\{ \vs_{t_{2} - 1}^{2}, \vs_{t_{2}}^{2} \right\}$. 
 The function $\left( X_{1}(v) + X_{2}(v) \right) \log(d-1)$ is minimized (among vertices not in $\left\{ \vs_{t_{1}}^{1}, \vs_{t_{2}}^{2}, \vs_{t_{2}-1}^{2} \right\}$)
 by the neighbors of $\vs^{1}$ that are not in $\left\{ \vs_{t_{2}}^{2}, \vs_{t_{2}-1}^{2} \right\}$. 
 The expression in~\eqref{eq:MLE_even_and_odd} also contains another term, $- \log X_{2}(v)$, 
 which is not minimized among the neighbors of $\vs^{1}$. 
 However, this is a lower order term compared to the first term 
 and it can be seen that the full expression in~\eqref{eq:MLE_even_and_odd} is minimized 
 by the neighbors of $\vs^{1}$ that are not in $\left\{ \vs_{t_{2}}^{2}, \vs_{t_{2}-1}^{2} \right\}$. 
 Therefore $\wh{v}_{\ML}$ chooses a neighbor of $\vs^{1}$ that is not in $\left\{ \vs_{t_{2}}^{2}, \vs_{t_{2}-1}^{2} \right\}$ 
 uniformly at random.

 \item\label{case:even_odd_ball_1} 
 \emph{$G_{t_{2}}^{2}$ is a ball and $X_{2}(\vs^{1}) = 1$.} 
 In this case $\vs_{t_{2} - 1}^{2} = \vs_{t_{2}}^{2} \equiv \vs^{2}$, 
 and $\vs^{1}$ and $\vs^{2}$ are neighbors. 
 The first term in~\eqref{eq:MLE_even_and_odd} 
 ($\left( X_{1}(v) + X_{2}(v) \right) \log(d-1)$) 
 is minimized by the neighbors of the set $\left\{ \vs^{1}, \vs^{2} \right\}$. 
 The second term in~\eqref{eq:MLE_even_and_odd} 
 ($- \log \left( t_{2} + 1 - 2 X_{2} (v) \right)$) 
 is minimized by the neighbors of $\vs^{2}$ that are not $\vs^{1}$. 
 Thus the whole expression is minimized by the neighbors of $\vs^{2}$ that are not $\vs^{1}$. 
 Therefore $\wh{v}_{\ML}$ chooses one of the $d-1$ neighbors of $\vs^{2}$ that is not $\vs^{1}$, 
 uniformly at random.\footnote{Here we use that $t_{1} \geq 4$, to ensure that all neighbors of $\vs^{2}$ are in $V_{t_{1}}^{1}$.} 
 
 \item\label{case:even_odd_notball_1} 
 \emph{$G_{t_{2}}^{2}$ is not a ball and $X_{2}(\vs^{1}) = 1$.} 
 This is similar to Case~\eqref{case:even_odd_notball_0} above and we omit the details for brevity. 
 The MLE is the same: 
 $\wh{v}_{\ML}$ chooses one of the $d-1$ neighbors of $\vs^{1}$ that is not in $\left\{ \vs_{t_{2}}^{2}, \vs_{t_{2}-1}^{2} \right\}$,  
 uniformly at random. 
 
 \item\label{case:even_odd_ball_2} 
 \emph{$G_{t_{2}}^{2}$ is a ball and $X_{2}(\vs^{1}) \geq 2$.} 
 In this case $\vs_{t_{2} - 1}^{2} = \vs_{t_{2}}^{2} \equiv \vs^{2}$. 
 The expression in~\eqref{eq:MLE_even_and_odd} contains two terms. 
 The first term ($\left( X_{1}(v) + X_{2}(v) \right) \log(d-1)$) 
 is minimized on the shortest path between $\vs^{1}$ and $\vs^{2}$. 
 The second term ($- \log \left( t_{2} + 1 - 2 X_{2} (v) \right)$) 
 is minimized when $X_{2}(v)$ is minimized. 
 We also have the constraint that 
 $v \in V_{t_{1}}^{1} \cap V_{t_{2}}^{2} \setminus \left\{ \vs^{1}, \vs^{2} \right\}$. 
 Let $P_{12}$ denote the set of vertices that are on the shortest path between $\vs^{1}$ and $\vs^{2}$, excluding $\vs^{1}$ and $\vs^{2}$. 
 Define $v'$ to be the vertex in $P_{12} \cap V_{t_{1}}^{1} \cap V_{t_{2}}^{2}$ that is closest to $\vs^{2}$ (note that $P_{12} \cap V_{t_{1}}^{1} \cap V_{t_{2}}^{2}$ is nonempty, due to the assumption that $X_{2}(\vs^{1}) \geq 2$). 
 Given the constraints, $v'$ is the unique vertex that minimizes both terms in the expression in~\eqref{eq:MLE_even_and_odd}. 
 Therefore $\wh{v}_{\ML} = v'$.

 \item\label{case:even_odd_notball_2}
 \emph{$G_{t_{2}}^{2}$ is not a ball and $X_{2}(\vs^{1}) \geq 2$.} 
 This is similar to Case~\eqref{case:even_odd_ball_2} above, so we omit the details for brevity and just state the conclusion. 
 Let $v''$ be the vertex in $P_{12} \cap V_{t_{1}}^{1} \cap V_{t_{2}}^{2}$ that is closest to $\vs^{1}$ 
 (note again that $P_{12} \cap V_{t_{1}}^{1} \cap V_{t_{2}}^{2}$ is nonempty, due to the assumption that $X_{2}(\vs^{1}) \geq 2$). 
 Then $\wh{v}_{\ML} = v''$. 
\end{enumerate}

Now that we understand the MLE, we can compute the probability that it is correct. 
We may again condition on whether or not $A_{12}$ holds (see~\eqref{eq:MLE_total_prob}) to obtain that 
\begin{equation}\label{eq:MLE_total_prob_even_odd}
 \p \left( \wh{v}_{\ML} = v^{*} \right) 
 = \p \left( \wh{v}_{\ML} = v^{*} \, \middle| \, A_{12} \right) \cdot \frac{d-1}{d}
 + \p \left( \wh{v}_{\ML} = v^{*} \, \middle| \, A_{12}^{C} \right) \cdot \frac{1}{d}.
\end{equation}
Note that if $A_{12}$ holds then we must be in Case~\eqref{case:even_odd_ball_2} or in Case~\eqref{case:even_odd_notball_2}. 
Consequently, Cases~\eqref{case:even_odd_ball_0},~\eqref{case:even_odd_notball_0},~\eqref{case:even_odd_ball_1}, and~\eqref{case:even_odd_notball_1} imply that $A_{12}$ does not hold.

Let us start by computing 
$\p \left( \wh{v}_{\ML} = v^{*} \, \middle| \, A_{12}^{C} \right)$. 
Recall from Cases~\eqref{case:even_odd_ball_2} and~\eqref{case:even_odd_notball_2} above 
that if $X_{2}(\vs^{1}) \geq 2$ then $\wh{v}_{\ML} \in P_{12}$, 
so if $A_{12}^{C}$ also holds then $\wh{v}_{\ML} \neq v^{*}$. 
Thus only Cases~\eqref{case:even_odd_ball_0},~\eqref{case:even_odd_notball_0},~\eqref{case:even_odd_ball_1}, and~\eqref{case:even_odd_notball_1} contribute to the probability 
$\p \left( \wh{v}_{\ML} = v^{*} \, \middle| \, A_{12}^{C} \right)$. 
In the following computations we use the expressions~\eqref{eq:alpha_uniform} and~\eqref{eq:p_uniform}. 

First, corresponding to Case~\eqref{case:even_odd_ball_0} above: 
if $\delta \left( v^{*}, \vs_{t_{1}}^{1} \right) = \delta \left( v^{*}, \vs_{t_{2} - 1}^{2} \right) = 1$ 
and $\vs_{t_{2} - 1}^{2} = \vs_{t_{2}}^{2}$, 
then the MLE is correct with probability $1/d$. 
This gives a contribution of 
$\tfrac{2}{t_{1}} \cdot \tfrac{2}{t_{2} - 1} \cdot \tfrac{t_{2} - 1}{t_{2} + 1} \cdot \tfrac{1}{d} 
= \tfrac{4}{t_{1} \left( t_{2} + 1 \right)} \cdot \tfrac{1}{d}$. 

Second, corresponding to Case~\eqref{case:even_odd_notball_0} above: 
if $\delta \left( v^{*}, \vs_{t_{1}}^{1} \right) = \delta \left( v^{*}, \vs_{t_{2} - 1}^{2} \right) = 1$ 
and $\delta \left( v^{*}, \vs_{t_{2}}^{2} \right) = 2$, 
then the MLE is correct with probability $1/(d-1)$. 
This gives a contribution of 
$\tfrac{2}{t_{1}} \cdot \tfrac{2}{t_{2}-1} \cdot \tfrac{2}{t_{2}+1} \cdot \tfrac{1}{d-1} 
= \tfrac{4}{t_{1} \left( t_{2} + 1 \right)} \cdot \tfrac{2}{t_{2} - 1} \cdot \tfrac{1}{d-1}$. 

Next, corresponding to Case~\eqref{case:even_odd_ball_1} above: 
if $\delta \left( v^{*}, \vs_{t_{2} - 1}^{2} \right) = 1$, 
$\delta \left( \vs_{t_{2} - 1}^{2}, \vs_{t_{1}}^{1} \right) = 1$, 
and $\vs_{t_{2} - 1}^{2} = \vs_{t_{2}}^{2}$, 
then the MLE is correct with probability $1/(d-1)$. 
This gives a contribution of 
$\tfrac{2}{t_{1}} \cdot \tfrac{2}{t_{2} - 1} \cdot \tfrac{t_{2} - 1}{t_{2} + 1} \cdot \tfrac{1}{d-1} 
= \tfrac{4}{t_{1} \left( t_{2} + 1 \right)} \cdot \tfrac{1}{d-1}$. 

Finally, corresponding to Case~\eqref{case:even_odd_notball_1} above: 
if $\delta \left( v^{*}, \vs_{t_{1}}^{1} \right) = 1$, 
$\delta \left( v^{*}, \vs_{t_{2} - 1}^{2} \right) = 2$, 
$\delta \left( v^{*}, \vs_{t_{2}}^{2} \right) = 3$, 
then the MLE is correct with probability $1/(d-1)$. 
This gives a contribution of 
$\tfrac{2}{t_{1}} \cdot \tfrac{2}{t_{2} - 1} \cdot \tfrac{4}{t_{2} + 1} \cdot \tfrac{1}{d-1} 
= \tfrac{4}{t_{1} \left( t_{2} + 1 \right)} \cdot \tfrac{4}{t_{2} - 1} \cdot \tfrac{1}{d-1}$. 

Putting together these four contributions, we obtain that 
\begin{equation}\label{eq:MLE_correct_even_odd_A12C}
 \p \left( \wh{v}_{\ML} = v^{*} \, \middle| \, A_{12}^{C} \right) 
 = \frac{4}{t_{1} \left( t_{2} + 1 \right)} \left( \frac{1}{d} + \frac{1}{d-1} + \frac{6}{t_{2} - 1} \cdot \frac{1}{d-1} \right). 
\end{equation}

We now turn to computing 
$\p \left( \wh{v}_{\ML} = v^{*} \, \middle| \, A_{12} \right)$. 
As mentioned before, if $A_{12}$ holds, then we must be in Case~\eqref{case:even_odd_ball_2} or in Case~\eqref{case:even_odd_notball_2} above.  
Accordingly, we have contributions to the probability 
$\p \left( \wh{v}_{\ML} = v^{*} \, \middle| \, A_{12} \right)$ 
from these two cases and we can break this probability into two terms: 
\begin{equation}\label{eq:cases_5_and_6}
 \p \left( \wh{v}_{\ML} = v^{*} \, \middle| \, A_{12} \right) 
 = \p \left( \wh{v}_{\ML} = v^{*}, \vs_{t_{2} - 1}^{2} = \vs_{t_{2}}^{2} \, \middle| \, A_{12} \right) 
 + \p \left( \wh{v}_{\ML} = v^{*}, \vs_{t_{2} - 1}^{2} \neq \vs_{t_{2}}^{2} \, \middle| \, A_{12} \right) 
\end{equation}
Here the first term corresponds to Case~\eqref{case:even_odd_ball_2} and the second corresponds to Case~\eqref{case:even_odd_notball_2}. 

Recall that in Case~\eqref{case:even_odd_ball_2} we have that $\wh{v}_{\ML} = v'$, where $v'$ is the vertex in $P_{12} \cap V_{t_{1}}^{1} \cap V_{t_{2}}^{2}$ that is closest to $\vs^{2}$. 
Given that $A_{12}$ holds, we have that $v^{*} = v'$ in exactly two situations: 
if $\delta \left( v^{*}, \vs^{2} \right) = 1$ 
or if $\delta \left( v^{*}, \vs^{1} \right) = t_{1} / 2$. 
In addition, to be in Case~\eqref{case:even_odd_ball_2} we have to have $\vs_{t_{2} - 1}^{2} = \vs_{t_{2}}^{2}$ (the other condition, $X_{2}(\vs^{1}) \geq 2$, is guaranteed given $A_{12}$). By separating into these cases we can write: 
\begin{multline}
 \p \left( \wh{v}_{\ML} = v^{*}, \vs_{t_{2} - 1}^{2} = \vs_{t_{2}}^{2} \, \middle| \, A_{12} \right) \\
\begin{aligned}
 &= \p \left( \left( \left\{ \delta \left( v^{*}, \vs_{t_{2}}^{2} \right) = 1 \right\} \cup \left\{ \delta \left( v^{*}, \vs_{t_{1}}^{1} \right) = t_{1} / 2 \right\} \right) \cap \left\{ \vs_{t_{2} - 1}^{2} = \vs_{t_{2}}^{2} \right\} \, \middle| \, A_{12} \right) \\
 &= \p \left( \delta \left( v^{*}, \vs_{t_{2}}^{2} \right) = 1 \, \middle| \, A_{12} \right) 
 + \p \left( \delta \left( v^{*}, \vs_{t_{1}}^{1} \right) = t_{1} / 2, \vs_{t_{2} - 1}^{2} = \vs_{t_{2}}^{2} \, \middle| \, A_{12} \right) \\
 &\quad - \p \left( \delta \left( v^{*}, \vs_{t_{2}}^{2} \right) = 1, \delta \left( v^{*}, \vs_{t_{1}}^{1} \right) = t_{1} / 2 \, \middle| \, A_{12} \right) \label{eq:three_probs}
\end{aligned}
\end{multline}
We now compute all three of these probabilities. Before we do so, we first compute the probability that $\vs_{t_{2} - 1}^{2} = \vs_{t_{2}}^{2}$. We do this by conditioning on the value of $\delta \left( v^{*}, \vs_{t_{2} - 1}^{2} \right)$. Using~\eqref{eq:alpha_uniform} and~\eqref{eq:p_uniform} we have that 
\begin{align*}
\p \left( \vs_{t_{2} - 1}^{2} = \vs_{t_{2}}^{2} \right) 
&= \sum_{h = 1}^{(t_{2} - 1)/2} \p \left( \vs_{t_{2} - 1}^{2} = \vs_{t_{2}}^{2} \, \middle| \, \delta \left( v^{*}, \vs_{t_{2} - 1}^{2} \right) = h \right) \cdot \frac{2}{t_{2} - 1} \\
&= \sum_{h = 1}^{(t_{2} - 1)/2} \frac{t_{2} + 1 - 2 h}{t_{2} + 1} \cdot \frac{2}{t_{2} - 1} 
= 1 - \frac{4}{(t_{2}-1)(t_{2}+1)} \sum_{h = 1}^{(t_{2} - 1)/2} h 
= 1 - \frac{1}{2} = \frac{1}{2}.
\end{align*}
Turning back to the three probabilities in~\eqref{eq:three_probs}, 
notice that in all three cases the events are independent of $A_{12}$. 
So first we have that 
\[
\p \left( \delta \left( v^{*}, \vs_{t_{2}}^{2} \right) = 1 \, \middle| \, A_{12} \right) 
= \p \left( \delta \left( v^{*}, \vs_{t_{2}}^{2} \right) = 1 \right) 
= \frac{2}{t_{2}+1}.
\]
Next, by independence we have that 
\begin{align*}
\p \left( \delta \left( v^{*}, \vs_{t_{1}}^{1} \right) = t_{1} / 2, \vs_{t_{2} - 1}^{2} = \vs_{t_{2}}^{2} \, \middle| \, A_{12} \right) 
&= \p \left( \delta \left( v^{*}, \vs_{t_{1}}^{1} \right) = t_{1} / 2, \vs_{t_{2} - 1}^{2} = \vs_{t_{2}}^{2} \right) \\ 
&= \p \left( \delta \left( v^{*}, \vs_{t_{1}}^{1} \right) = t_{1} / 2 \right) \p \left( \vs_{t_{2} - 1}^{2} = \vs_{t_{2}}^{2} \right) 
= \frac{2}{t_{1}} \cdot \frac{1}{2} = \frac{1}{t_{1}}.
\end{align*}
Finally, by using independence again, we have that 
\begin{multline*}
\p \left( \delta \left( v^{*}, \vs_{t_{2}}^{2} \right) = 1, \delta \left( v^{*}, \vs_{t_{1}}^{1} \right) = t_{1} / 2 \, \middle| \, A_{12} \right) 
= \p \left( \delta \left( v^{*}, \vs_{t_{2}}^{2} \right) = 1, \delta \left( v^{*}, \vs_{t_{1}}^{1} \right) = t_{1} / 2 \right) \\
= \p \left( \delta \left( v^{*}, \vs_{t_{2}}^{2} \right) = 1 \right) \p \left( \delta \left( v^{*}, \vs_{t_{1}}^{1} \right) = t_{1} / 2 \right) 
= \frac{2}{t_{2} + 1} \cdot \frac{2}{t_{1}}.
\end{multline*}
Plugging the previous three displays into~\eqref{eq:three_probs}, we have determined the first term in~\eqref{eq:cases_5_and_6}:
\begin{equation}\label{eq:cases_5_and_6_first_term}
\p \left( \wh{v}_{\ML} = v^{*}, \vs_{t_{2} - 1}^{2} = \vs_{t_{2}}^{2} \, \middle| \, A_{12} \right) 
= \frac{1}{t_{1}} + \frac{2}{t_{2} + 1} - \frac{4}{t_{1} \left( t_{2} + 1 \right)}.
\end{equation}

The analysis of the second term in~\eqref{eq:cases_5_and_6} is analogous to what we have just done for the first term, so we omit the details. In fact, it turns out that this second term is equal to the first term: 
\begin{equation}\label{eq:cases_5_and_6_second_term}
\p \left( \wh{v}_{\ML} = v^{*}, \vs_{t_{2} - 1}^{2} \neq \vs_{t_{2}}^{2} \, \middle| \, A_{12} \right)
= \frac{1}{t_{1}} + \frac{2}{t_{2} + 1} - \frac{4}{t_{1} \left( t_{2} + 1 \right)}.
\end{equation}
Putting together~\eqref{eq:cases_5_and_6},~\eqref{eq:cases_5_and_6_first_term}, and~\eqref{eq:cases_5_and_6_second_term}, we obtain that 
\begin{equation}\label{eq:MLE_correct_even_odd_A12}
 \p \left( \wh{v}_{\ML} = v^{*} \, \middle| \, A_{12} \right) 
 = \frac{2}{t_{1}} + \frac{4}{t_{2} + 1} - \frac{8}{t_{1} \left( t_{2} + 1 \right)}
\end{equation}

Thus putting together~\eqref{eq:MLE_total_prob_even_odd},~\eqref{eq:MLE_correct_even_odd_A12C}, and~\eqref{eq:MLE_correct_even_odd_A12}, we obtain that 
\[
 \p \left( \wh{v}_{\ML} = v^{*} \right) 
 = \frac{d-1}{d} \left( \frac{2}{t_{1}} + \frac{4}{t_{2} + 1} - \frac{8}{t_{1} \left( t_{2} + 1 \right)} \right) 
 + \frac{1}{d} \cdot \frac{4}{t_{1} \left( t_{2} + 1 \right)} \left( \frac{1}{d} + \frac{1}{d-1} + \frac{6}{t_{2} - 1} \cdot \frac{1}{d-1} \right).
\]
Separating the main terms and the lower order terms, we can write 
\begin{align*}
 \p \left( \wh{v}_{\ML} = v^{*} \right) 
 &= \frac{d-1}{d} \left( \frac{2}{t_{1}} + \frac{4}{t_{2} + 1} \right) \\
 &\quad + \frac{1}{d} \cdot \frac{4}{t_{1} \left( t_{2} + 1 \right)} \left( - 2 \left( d - 1 \right) +  \frac{1}{d} + \frac{1}{d-1} + \frac{6}{t_{2} - 1} \cdot \frac{1}{d-1} \right).
\end{align*}
Since $d \geq 3$ and $t_{2} \geq 5$, the second term in the expression above is negative, so we have that 
\[
 \p \left( \wh{v}_{\ML} = v^{*} \right) 
 \leq \frac{d-1}{d} \left( \frac{2}{t_{1}} + \frac{4}{t_{2} + 1} \right) 
 \leq \frac{d-1}{d} \frac{6}{\min \left\{ t_{1}, t_{2} \right\}}. \qedhere
\]
\end{proof}

\subsection{When both $t_{1}$ and $t_{2}$ are odd} 

\begin{proof}[Proof of Theorem~\ref{thm:two_obs}(\ref{thm:two_obs_obfu}) when both $t_{1}$ and $t_{2}$ are odd.] 
We may (and will) assume in the following that $t_{1}, t_{2} \geq 5$, since if $\min \left\{ t_{1}, t_{2} \right\} \leq 3$, then the right hand side of~\eqref{eq:two_obs_obfu} is greater than $1$ and thus the statement is vacuously true. 

Recall that we are analyzing the MLE for the \emph{uniform protocol} given by the probabilities in~\eqref{eq:alpha_uniform} 
and for which we have that~\eqref{eq:p_uniform} holds. 
We start with a few observations. 
For $i \in \left\{ 1, 2 \right\}$, 
the observation $G_{t_{i}}^{i}$ is either a ball or it is not. 
If $G_{t_{i}}^{i}$ is a ball, then it is a ball of radius $(t_{i}-1)/2$ around $\vs_{t_{i} - 1}^{i} = \vs_{t_{i}}^{i}$, and hence we can determine $\vs_{t_{i} - 1}^{i} = \vs_{t_{i}}^{i}$. 
If $G_{t_{i}}^{i}$ is not a ball, 
then its central edge is $\left\{ \vs_{t_{i} - 1}^{i}, \vs_{t_{i}}^{i} \right\}$, 
and hence we can determine the set $\left\{ \vs_{t_{i} - 1}^{i}, \vs_{t_{i}}^{i} \right\}$. 
In any case, we may thus define 
$X_{i}(v) := \min \left\{ \delta \left( v, \vs_{t_{i} - 1}^{i} \right), \delta \left( v, \vs_{t_{i}}^{i} \right) \right\}$ 
and note that the function $X_{i} : V \mapsto \R$ is determined by the observation $G_{t_{i}}^{i}$.

With this notation, 
using the expressions in~\eqref{eq:likelihood_odd}, 
the independence of the two observations, 
and substituting the expressions in~\eqref{eq:alpha_uniform} and~\eqref{eq:p_uniform}, 
we can write down the likelihood function. 
There are four cases, depending on whether or not $G_{t_{1}}^{1}$ and $G_{t_{2}}^{2}$ are balls or not. 
Instead of detailing all four cases of the likelihood function, we skip straight to writing down an expression for the MLE in the four cases; this is analogous to~\eqref{eq:MLE_even_and_odd}. 
If $G_{t_{1}}^{1}$ and $G_{t_{2}}^{2}$ are both balls, then 
\begin{equation}\label{eq:MLE_odd_odd_balls}
 \wh{v}_{\ML} \in 
 \argmin\limits_{v \in V_{t_{1}}^{1} \cap V_{t_{2}}^{2} \setminus \left\{ \vs_{t_{1}}^{1}, \vs_{t_{2}}^{2} \right\}} 
  \left[ \left( X_{1}(v) + X_{2}(v) \right) \log(d-1) - \log \left( t_{1} + 1 - 2 X_{1} (v) \right) - \log \left( t_{2} + 1 - 2 X_{2} (v) \right) \right]. 
\end{equation}
If $G_{t_{1}}^{1}$ is a ball and $G_{t_{2}}^{2}$ is not a ball, then 
\begin{equation}\label{eq:MLE_odd_odd_ball_notball}
 \wh{v}_{\ML} \in 
 \argmin\limits_{v \in V_{t_{1}}^{1} \cap V_{t_{2}}^{2} \setminus \left\{ \vs_{t_{1}}^{1}, \vs_{t_{2}}^{2}, \vs_{t_{2} - 1}^{2} \right\}} 
  \left[ \left( X_{1}(v) + X_{2}(v) \right) \log(d-1) - \log \left( t_{1} + 1 - 2 X_{1} (v) \right) - \log  X_{2} (v) \right]. 
\end{equation}
If $G_{t_{1}}^{1}$ is not a ball and $G_{t_{2}}^{2}$ is a ball, 
then the MLE satisfies the display above with the indices $1$ and $2$ switched. 
Finally, if $G_{t_{1}}^{1}$ is not a ball and $G_{t_{2}}^{2}$ is also not a ball, then 
\begin{equation}\label{eq:MLE_odd_odd_notballs}
 \wh{v}_{\ML} \in 
 \argmin\limits_{v \in V_{t_{1}}^{1} \cap V_{t_{2}}^{2} \setminus \left\{ \vs_{t_{1}}^{1}, \vs_{t_{1} - 1}^{1}, \vs_{t_{2}}^{2}, \vs_{t_{2} - 1}^{2} \right\}} 
  \left[ \left( X_{1}(v) + X_{2}(v) \right) \log(d-1) - \log X_{1} (v) - \log  X_{2} (v) \right]. 
\end{equation}

To understand the MLE better, we distinguish several cases. 
These are based on whether or not $G_{t_{1}}^{1}$ and $G_{t_{2}}^{2}$ are balls, 
as well as the distance of the sets 
$\left\{ \vs_{t_{1} - 1}^{1}, \vs_{t_{1}}^{1} \right\}$ 
and 
$\left\{ \vs_{t_{2} - 1}^{2}, \vs_{t_{2}}^{2} \right\}$ 
(and if their distance is zero, what does their intersection look like). 
Compared to the case where one of $t_{1}$ and $t_{2}$ is even and the other is odd, 
we have essentially twice as many cases to consider, because both $G_{t_{1}}^{1}$ and $G_{t_{2}}^{2}$ can be balls or not. 
In all cases there is a simple description of the MLE, which we detail next. 

We first consider the case when $G_{t_{1}}^{1}$ and $G_{t_{2}}^{2}$ are both balls. 
In this case, to abbreviate notation, we let 
$\vs^{1} \equiv \vs^{1}_{t_{1} - 1} = \vs^{1}_{t_{1}}$ 
and 
$\vs^{2} \equiv \vs^{2}_{t_{2} - 1} = \vs^{2}_{t_{2}}$. 
We distinguish three subcases based on whether $X_{2} \left( \vs^{1} \right)$ is $0$, $1$, or at least $2$. 
\begin{enumerate}[(1)]
\item \textit{$G_{t_{1}}^{1}$ and $G_{t_{2}}^{2}$ are both balls, and $X_{2} \left( \vs^{1} \right) = 0$.} \label{case:odd_odd_balls_1}
In this case $\vs^{1} = \vs^{2}$ and all three terms in~\eqref{eq:MLE_odd_odd_balls} are minimized by the neighbors of $\vs^{1} = \vs^{2}$. 
Therefore $\wh{v}_{\ML}$ chooses a neighbor of $\vs^{1} = \vs^{2}$ uniformly at random. 

\item \textit{$G_{t_{1}}^{1}$ and $G_{t_{2}}^{2}$ are both balls, and $X_{2} \left( \vs^{1} \right) = 1$.} \label{case:odd_odd_balls_2}
In this case $\vs^{1}$ and $\vs^{2}$ are neighbors. Let us consider all three terms in~\eqref{eq:MLE_odd_odd_balls}. 
The function $\left( X_{1}(v) + X_{2}(v) \right) \log(d-1)$ is minimized (among vertices not in $\left\{ \vs^{1}, \vs^{2} \right\}$) by the neighbors of the set $\left\{ \vs^{1}, \vs^{2} \right\}$. 
The function $- \log \left( t_{1} + 1 - 2 X_{1} (v) \right)$ is increasing in $X_{1}(v)$, 
while the function $- \log \left( t_{2} + 1 - 2 X_{2} (v) \right)$ is increasing in $X_{2}(v)$. 
Consequently, the set of minimizers of their sum (among vertices not in $\left\{ \vs^{1}, \vs^{2} \right\}$) is contained within the neighbors of the set $\left\{ \vs^{1}, \vs^{2} \right\}$. 
The precise set of minimizers depends on the relationship between $t_{1}$ and $t_{2}$. 
If $t_{1} = t_{2}$, then $\wh{v}_{\ML}$ chooses a neighbor of the set $\left\{ \vs^{1}, \vs^{2} \right\}$ uniformly at random. 
If $t_{1} > t_{2}$, then $\wh{v}_{\ML}$ chooses one of the $d-1$ neighbors of $\vs^{2}$ that is not $\vs^{1}$, uniformly at random. 
If $t_{1} < t_{2}$, then the indices are switched: $\wh{v}_{\ML}$ chooses one of the $d-1$ neighbors of $\vs^{1}$ that is not $\vs^{2}$, uniformly at random.

\item \textit{$G_{t_{1}}^{1}$ and $G_{t_{2}}^{2}$ are both balls, and $X_{2} \left( \vs^{1} \right) \geq 2$.} \label{case:odd_odd_balls_3} 
We again consider the three terms in~\eqref{eq:MLE_odd_odd_balls}. 
The first term ($\left( X_{1}(v) + X_{2}(v) \right) \log(d-1)$) is minimized on the shortest path between $\vs^{1}$ and~$\vs^{2}$. 
The other two terms are increasing in $X_{1}(v)$ and $X_{2}(v)$, respectively. 
This implies that the minimizer of the whole expression in~\eqref{eq:MLE_odd_odd_balls} lies on the shortest path between $\vs^{1}$ and $\vs^{2}$. 
Let $P_{12}$ denote the set of vertices that are on the shortest path between $\vs^{1}$ and $\vs^{2}$, excluding $\vs^{1}$ and $\vs^{2}$. 
We thus have that the MLE satisfies 
\[
\wh{v}_{\ML} \in 
\argmax_{v \in P_{12} \cap V_{t_{1}}^{1} \cap V_{t_{2}}^{2}} 
\left( t_{1} + 1 - 2 X_{1}(v) \right) \left( t_{2} + 1 - 2 X_{2} \left( v \right) \right).
\]

\end{enumerate}

Next, we consider the case when $G_{t_{1}}^{1}$ is a ball and $G_{t_{2}}^{2}$ is not a ball. In this case we can again write $\vs^{1} \equiv \vs^{1}_{t_{1} - 1} = \vs^{1}_{t_{1}}$ to abbreviate notation. 
We again distinguish three subcases based on whether $X_{2} \left( \vs^{1} \right)$ is $0$, $1$, or at least $2$. 

\begin{enumerate}[(1)]
\setcounter{enumi}{3}
\item \textit{$G_{t_{1}}^{1}$ is a ball, $G_{t_{2}}^{2}$ is not a ball, and $X_{2} \left( \vs^{1} \right) = 0$.} \label{case:odd_odd_ball_notball_4}
In this case $\vs_{t_{2} - 1}^{2}$ and $\vs_{t_{2}}^{2}$ are neighbors 
and $\vs^{1} \in \left\{ \vs_{t_{2} - 1}^{2}, \vs_{t_{2}}^{2} \right\}$. 
Let us consider all three terms in~\eqref{eq:MLE_odd_odd_ball_notball}. 
The first term, 
$\left( X_{1}(v) + X_{2}(v) \right) \log(d-1)$, 
is minimized (among vertices not in $\left\{ \vs_{t_{1}}^{1}, \vs_{t_{2} - 1}^{2}, \vs_{t_{2}}^{2} \right\}$) by the neighbors of $\vs^{1}$ that are not in $\left\{ \vs_{t_{2} - 1}^{2}, \vs_{t_{2}}^{2} \right\}$. 
The second term, 
$- \log \left( t_{1} + 1 - 2 X_{1} (v) \right)$, 
is an increasing function of $X_{1}(v)$, 
and hence it is also minimized (among possible vertices) 
by the neighbors of $\vs^{1}$ that are not in $\left\{ \vs_{t_{2} - 1}^{2}, \vs_{t_{2}}^{2} \right\}$. 
The third term, $-\log X_{2}(v)$, is not minimized among the neighbors of $\vs^{1}$; 
however, this is a lower order term compared to the first term and it can be seen that the full expression in~\eqref{eq:MLE_odd_odd_ball_notball} is minimized by the neighbors of $\vs^{1}$ that are not in $\left\{ \vs_{t_{2} - 1}^{2}, \vs_{t_{2}}^{2} \right\}$. 
Therefore $\wh{v}_{\ML}$ chooses a neighbor of $\vs^{1}$ that is not in $\left\{ \vs_{t_{2} - 1}^{2}, \vs_{t_{2}}^{2} \right\}$ uniformly at random.

\item \textit{$G_{t_{1}}^{1}$ is a ball, $G_{t_{2}}^{2}$ is not a ball, and $X_{2} \left( \vs^{1} \right) = 1$.} \label{case:odd_odd_ball_notball_5}
This is similar to Case~\eqref{case:odd_odd_ball_notball_4} above and we omit the details for brevity. The MLE is the same: 
$\wh{v}_{\ML}$ chooses one of the $d-1$ neighbors of $\vs^{1}$ that is not in $\left\{ \vs_{t_{2} - 1}^{2}, \vs_{t_{2}}^{2} \right\}$, uniformly at random.

\item \textit{$G_{t_{1}}^{1}$ is a ball, $G_{t_{2}}^{2}$ is not a ball, and $X_{2} \left( \vs^{1} \right) \geq 2$.} \label{case:odd_odd_ball_notball_6}
Let us again consider the three terms in~\eqref{eq:MLE_odd_odd_ball_notball}, 
and let $\vs^{2}$ denote the vertex in the set 
$\left\{ \vs_{t_{2} - 1}^{2}, \vs_{t_{2}}^{2} \right\}$ 
that is closer to $\vs^{1}$. 
The first term, 
$\left( X_{1}(v) + X_{2}(v) \right) \log(d-1)$, 
is minimized on the shortest path between $\vs^{1}$ and $\vs^{2}$. 
The second term is increasing in $X_{1}(v)$, 
while the third term is decreasing in $X_{2}(v)$. 
We also have the constraint that $v \in V_{t_{1}}^{1} \cap V_{t_{2}}^{2} \setminus \left\{ \vs^{1}, \vs_{t_{2} - 1}^{2}, \vs_{t_{2}}^{2} \right\}$. 
Let $P_{12}$ denote the set of vertices that are on the shortest path between $\vs^{1}$ and $\vs^{2}$, excluding $\vs^{1}$ and $\vs^{2}$. 
Define $v'$ to be the vertex in $P_{12} \cap V_{t_{1}}^{1} \cap V_{t_{2}}^{2}$ 
that is closest to $\vs^{1}$ (note that $P_{12} \cap V_{t_{1}}^{1} \cap V_{t_{2}}^{2}$ is nonempty, due to the assumption that $X_{2} \left( \vs^{1} \right) \geq 2$). 
Given the constraints, $v'$ is the unique vertex that minimizes both the first and the second terms in~\eqref{eq:MLE_odd_odd_ball_notball}. Depending on the details, $v'$ might also minimize the third term in~\eqref{eq:MLE_odd_odd_ball_notball}, but even if it does not, it turns out that $v'$ is always the overall minimizer of the expression in~\eqref{eq:MLE_odd_odd_ball_notball}. Therefore $\wh{v}_{\ML} = v'$. 

\end{enumerate}

The case when $G_{t_{1}}^{1}$ is not a ball and $G_{t_{2}}^{2}$ is a ball 
is identical to the above, with the indices $1$ and~$2$ switched. 
Finally, we consider the case when both $G_{t_{1}}^{1}$ and $G_{t_{2}}^{2}$ are not balls. 
We now distinguish four subcases; 
these are based on whether the distance between 
$\left\{ \vs_{t_{1}-1}^{1}, \vs_{t_{1}}^{1} \right\}$ 
and 
$\left\{ \vs_{t_{2} - 1}^{2}, \vs_{t_{2}}^{2} \right\}$ 
is $0$, $1$, or at least $2$, 
and when this distance is $0$, 
we further distinguish between when the size of the intersection of these two sets is $1$ or $2$. 

\begin{enumerate}[(1)]
\setcounter{enumi}{6}
\item \label{case:odd_odd_notball_notball_7} 
\textit{$G_{t_{1}}^{1}$ is not a ball, $G_{t_{2}}^{2}$ is not a ball, and $\left\{ \vs_{t_{1}-1}^{1}, \vs_{t_{1}}^{1} \right\} = \left\{ \vs_{t_{2} - 1}^{2}, \vs_{t_{2}}^{2} \right\}$.} 
In this case for any possible vertex $v$ we have that $\ell := X_{1} (v) = X_{2} (v) \geq 1$. Thus, by~\eqref{eq:MLE_odd_odd_notballs}, the function that we need to minimize is 
$\ell \mapsto 2 \ell \log (d-1) - 2 \log \ell$. 
There are now two cases to distinguish, depending on the value of $d$. 

When $d \geq 4$, this function is minimized when $\ell = 1$. 
Therefore $\wh{v}_{\ML}$ chooses one of the $2d-2$ neighbors of the set 
$\left\{ \vs_{t_{1}-1}^{1}, \vs_{t_{1}}^{1} \right\}$, uniformly at random. 

When $d = 3$, this function is minimized when $\ell \in \left\{ 1, 2 \right\}$. 
Therefore $\wh{v}_{\ML}$ chooses one of the 
$2 \left( d - 1 + (d-1)^{2} \right) = 12$ vertices at distance $1$ or $2$ from the set 
$\left\{ \vs_{t_{1}-1}^{1}, \vs_{t_{1}}^{1} \right\}$, uniformly at random.

\item \label{case:odd_odd_notball_notball_8} 
\textit{$G_{t_{1}}^{1}$ is not a ball, $G_{t_{2}}^{2}$ is not a ball, and $\left| \left\{ \vs_{t_{1}-1}^{1}, \vs_{t_{1}}^{1} \right\} \cap \left\{ \vs_{t_{2} - 1}^{2}, \vs_{t_{2}}^{2} \right\} \right| = 1$.} 
To abbreviate notation, let 
$H := \left\{ \vs_{t_{1}-1}^{1}, \vs_{t_{1}}^{1} \right\} \cup \left\{ \vs_{t_{2} - 1}^{2}, \vs_{t_{2}}^{2} \right\}$ 
and 
$\wt{v} := \left\{ \vs_{t_{1}-1}^{1}, \vs_{t_{1}}^{1} \right\} \cap \left\{ \vs_{t_{2} - 1}^{2}, \vs_{t_{2}}^{2} \right\}$. 
We can categorize the vertices in $V \setminus H$ into three groups. 

There are vertices $v$ for which $\ell := X_{1} (v) = X_{2} (v) - 1 \geq 1$. 
For such vertices the quantity in~\eqref{eq:MLE_odd_odd_notballs} is equal to 
$\left( 2 \ell + 1 \right) \log \left( d - 1 \right) - \log \ell - \log \left( \ell + 1 \right)$. 

There are vertices $v$ for which $\ell := X_{2} (v) = X_{1} (v) - 1 \geq 1$. 
For such vertices the quantity in~\eqref{eq:MLE_odd_odd_notballs} is also equal to 
$\left( 2 \ell + 1 \right) \log \left( d - 1 \right) - \log \ell - \log \left( \ell + 1 \right)$. 

Finally, there are vertices $v$ for which $\ell := X_{1} (v) = X_{2} (v) = \delta \left( v, \wt{v} \right) \geq 1$. 
For such vertices the quantity in~\eqref{eq:MLE_odd_odd_notballs} is equal to 
$2 \ell \log (d - 1) - 2 \log \ell$ (which is the same expression as in Case~\eqref{case:odd_odd_notball_notball_7} above). 

When minimizing these quantities, there are now two cases to distinguish, depending on the value of $d$. When $d \geq 4$, $\wh{v}_{\ML}$ chooses one of the $d-2$ neighbors of $\wt{v}$ not in $H$, uniformly at random. 
When $d = 3$, $\wh{v}_{\ML}$ chooses one of $(d-2) + d(d-1) = 7$ nodes that is at distance $1$ or $2$ from $\wt{v}$ and is not in $H$, uniformly at random.

\item \label{case:odd_odd_notball_notball_9} 
\textit{$G_{t_{1}}^{1}$ is not a ball, $G_{t_{2}}^{2}$ is not a ball, and the distance between $\left\{ \vs_{t_{1}-1}^{1}, \vs_{t_{1}}^{1} \right\}$ and $\left\{ \vs_{t_{2} - 1}^{2}, \vs_{t_{2}}^{2} \right\}$ is equal to $1$.} 
To abbreviate notation, let 
$H := \left\{ \vs_{t_{1}-1}^{1}, \vs_{t_{1}}^{1} \right\} \cup \left\{ \vs_{t_{2} - 1}^{2}, \vs_{t_{2}}^{2} \right\}$, 
let 
$\wt{v}^{1}$ denote the vertex in 
$\left\{ \vs_{t_{1}-1}^{1}, \vs_{t_{1}}^{1} \right\}$ that is closest to 
$\left\{ \vs_{t_{2} - 1}^{2}, \vs_{t_{2}}^{2} \right\}$, 
and define $\wt{v}^{2}$ analogously. 
We can categorize the vertices in $V \setminus H$ into four groups. 

There are vertices $v$ for which $\ell := X_{1} (v) = X_{2} (v) - 2 \geq 1$. 
For such vertices the quantity in~\eqref{eq:MLE_odd_odd_notballs} is equal to 
$(2 \ell + 2) \log (d-1) - \log \ell - \log \left( \ell + 2 \right)$. 

There are vertices $v$ for which $\ell := X_{1} (v) = \delta \left( v, \wt{v}^{1} \right) \geq 1$ and $X_{2} (v) = \delta \left( v, \wt{v}^{2} \right) = \ell + 1$. 
For such vertices the quantity in~\eqref{eq:MLE_odd_odd_notballs} is equal to 
$(2 \ell + 1) \log (d-1) - \log \ell - \log \left( \ell + 1 \right)$. 

There are vertices $v$ for which $\ell := X_{2} (v) = \delta \left( v, \wt{v}^{2} \right) \geq 1$ and $X_{1} (v) = \delta \left( v, \wt{v}^{1} \right) = \ell + 1$. 
For such vertices the quantity in~\eqref{eq:MLE_odd_odd_notballs} is also equal to 
$(2 \ell + 1) \log (d-1) - \log \ell - \log \left( \ell + 1 \right)$. 

Finally, there are vertices $v$ for which $\ell := X_{2} (v) = X_{1} (v) - 2 \geq 1$. 
For such vertices the quantity in~\eqref{eq:MLE_odd_odd_notballs} is equal to 
$(2 \ell + 2) \log (d-1) - \log \ell - \log \left( \ell + 2 \right)$. 

Now note that whenever $d \geq 3$ and $\ell \geq 1$, we always have that 
$(2 \ell + 2) \log (d-1) - \log \ell - \log \left( \ell + 2 \right) > 
(2 \ell + 1) \log (d-1) - \log \ell - \log \left( \ell + 1 \right)$. 
Furthermore, note that the function 
$\ell \mapsto \left(2 \ell + 1\right) \log \left(d-1\right) - \log \ell - \log \left( \ell + 1 \right)$ 
on the domain $\ell \geq 1$ is minimized at $\ell = 1$. 
Putting these together we have that $\wh{v}_{\ML}$ chooses one of the $2(d-2)$ nodes that are a neighbor of $\wt{v}^{1}$ or $\wt{v}^{2}$ and not in $H$, uniformly at random.

\item \label{case:odd_odd_notball_notball_10}
\textit{$G_{t_{1}}^{1}$ is not a ball, $G_{t_{2}}^{2}$ is not a ball, and the distance between $\left\{ \vs_{t_{1}-1}^{1}, \vs_{t_{1}}^{1} \right\}$ and $\left\{ \vs_{t_{2} - 1}^{2}, \vs_{t_{2}}^{2} \right\}$ is at least $2$.} 
Again, let 
$\wt{v}^{1}$ denote the vertex in 
$\left\{ \vs_{t_{1}-1}^{1}, \vs_{t_{1}}^{1} \right\}$ that is closest to 
$\left\{ \vs_{t_{2} - 1}^{2}, \vs_{t_{2}}^{2} \right\}$, 
and define $\wt{v}^{2}$ analogously. 
Let $P_{12}$ denote the set of vertices that are on the shortest path between $\wt{v}^{1}$ and $\wt{v}^{2}$, excluding $\wt{v}^{1}$ and $\wt{v}^{2}$.  

The expression in~\eqref{eq:MLE_odd_odd_notballs} has three terms. The first term, $\left( X_{1} (v) + X_{2} (v) \right) \log \left( d - 1 \right)$, is the main term, and it is minimized on $P_{12}$, taking on the value $\delta \left( \wt{v}^{1}, \wt{v}^{2} \right) \log \left( d - 1 \right)$. The other two terms are decreasing in $X_{1}(v)$ and $X_{2}(v)$, respectively. However, it turns out these are lower order terms and that almost always the whole expression in~\eqref{eq:MLE_odd_odd_notballs} is minimized on $P_{12}$, 
and thus the MLE satisfies 
\[
\wh{v}_{\ML} \in \argmax_{v \in P_{12} \cap V_{t_{1}}^{1} \cap V_{t_{2}}^{2}} X_{1} \left( v \right) X_{2} \left( v \right).
\]

There is only one exception to this: when $d = 3$ and $\delta \left( \wt{v}^{1}, \wt{v}^{2} \right) = 2$. 
In this case let $w$ be the unique vertex that is at distance $1$ from both $\wt{v}^{1}$ and $\wt{v}^{2}$, 
and let $w'$ be the unique vertex that is at distance $2$ from both $\wt{v}^{1}$ and $\wt{v}^{2}$. 
In this case the expression in~\eqref{eq:MLE_odd_odd_notballs} is minimized at $w$ and $w'$, so $\wh{v}_{\ML}$ chooses $w$ or $w'$, uniformly at random. 
\end{enumerate}

Now that we have fully described the MLE, 
we can compute the probability that it is correct. 
We may again condition on whether or not $A_{12}$ holds (see~\eqref{eq:MLE_total_prob}) to obtain that 
\begin{equation}\label{eq:MLE_total_prob_odd_odd}
 \p \left( \wh{v}_{\ML} = v^{*} \right) 
 = \p \left( \wh{v}_{\ML} = v^{*} \, \middle| \, A_{12} \right) \cdot \frac{d-1}{d}
 + \p \left( \wh{v}_{\ML} = v^{*} \, \middle| \, A_{12}^{C} \right) \cdot \frac{1}{d}.
\end{equation}
Note that if $A_{12}$ holds then we must be in Cases~\eqref{case:odd_odd_balls_3},~\eqref{case:odd_odd_ball_notball_6}, or~\eqref{case:odd_odd_notball_notball_10}. 
Consequently, Cases~\eqref{case:odd_odd_balls_1},~\eqref{case:odd_odd_balls_2},~\eqref{case:odd_odd_ball_notball_4}, \eqref{case:odd_odd_ball_notball_5}, \eqref{case:odd_odd_notball_notball_7},~\eqref{case:odd_odd_notball_notball_8}, and~\eqref{case:odd_odd_notball_notball_9} imply that $A_{12}$ does not hold. 

Let us start by computing 
$\p \left( \wh{v}_{\ML} = v^{*} \, \middle| \, A_{12}^{C} \right)$. 
Recall from above that in Cases~\eqref{case:odd_odd_balls_3} and~\eqref{case:odd_odd_ball_notball_6} we have that $\wh{v}_{\ML} \in P_{12}$, 
so if $A_{12}^{C}$ holds, then $\wh{v}_{\ML} \neq v^{*}$. 
In Case~\eqref{case:odd_odd_notball_notball_10} we also have that $\wh{v}_{\ML} \in P_{12}$, with one exception (see above), but even then we have that $\wh{v}_{\ML}$ is ``in between'' $\left\{ \vs_{t_{1}-1}^{1}, \vs_{t_{1}}^{1} \right\}$ and $\left\{ \vs_{t_{2} - 1}^{2}, \vs_{t_{2}}^{2} \right\}$, and so if $A_{12}^{C}$ holds, then $\wh{v}_{\ML} \neq v^{*}$. 
Thus only Cases~\eqref{case:odd_odd_balls_1},~\eqref{case:odd_odd_balls_2},~\eqref{case:odd_odd_ball_notball_4}, \eqref{case:odd_odd_ball_notball_5}, \eqref{case:odd_odd_notball_notball_7},~\eqref{case:odd_odd_notball_notball_8}, and~\eqref{case:odd_odd_notball_notball_9} 
contribute to the probability 
$\p \left( \wh{v}_{\ML} = v^{*} \, \middle| \, A_{12}^{C} \right)$. 
In the following computations we use the expressions~\eqref{eq:alpha_uniform} and~\eqref{eq:p_uniform}. 

First, corresponding to Case~\eqref{case:odd_odd_balls_1} above: 
if $\delta \left( v^{*}, \vs_{t_{1}}^{1} \right) = \delta \left( v^{*}, \vs_{t_{2}}^{2} \right) = 1$, 
then the MLE is correct with probability $1/d$. 
This gives a contribution of 
$\frac{2}{t_{1} - 1} \cdot \frac{t_{1} - 1}{t_{1} + 1} \cdot \frac{2}{t_{2} - 1} \cdot \frac{t_{2} - 1}{t_{2} + 1} \cdot \frac{1}{d} 
= \frac{4}{\left( t_{1} + 1 \right) \left( t_{2} + 1 \right)} \cdot \frac{1}{d}$. 

Second, corresponding to Case~\eqref{case:odd_odd_balls_2} above, we distinguish three subcases based on whether $t_{1} = t_{2}$, $t_{1} < t_{2}$, or $t_{1} > t_{2}$. 
If $t_{1} = t_{2}$, 
then if 
$\delta \left( v^{*}, \vs_{t_{1}}^{1} \right) = 1$, 
$\delta \left( v^{*}, \vs_{t_{2}-1}^{2} \right) = 2$, 
and $\vs_{t_{2} - 1}^{2} = \vs_{t_{2}}^{2}$, 
or if 
$\delta \left( v^{*}, \vs_{t_{2}}^{2} \right) = 1$, 
$\delta \left( v^{*}, \vs_{t_{1}-1}^{1} \right) = 2$, 
and $\vs_{t_{1} - 1}^{1} = \vs_{t_{1}}^{1}$, 
then the MLE is correct with probability 
$1 / (2d-2)$. 
This gives a contribution of 
\[
\left( \frac{2}{t_{1} + 1} \cdot \frac{2}{t_{2} - 1} \cdot \frac{t_{2} - 3}{t_{2} + 1} 
+ \frac{2}{t_{2} + 1} \cdot \frac{2}{t_{1} - 1} \cdot \frac{t_{1} - 3}{t_{1} + 1} \right) 
\frac{1}{2 \left( d - 1 \right)} 
= \frac{4 \left( t - 3 \right)}{\left( t - 1 \right) \left( t + 1 \right)^{2}} \cdot \frac{1}{d-1},
\]
where $t := t_{1} = t_{2}$. 
If $t_{1} < t_{2}$, 
then if 
$\delta \left( v^{*}, \vs_{t_{1}}^{1} \right) = 1$, 
$\delta \left( v^{*}, \vs_{t_{2}-1}^{2} \right) = 2$, 
and $\vs_{t_{2} - 1}^{2} = \vs_{t_{2}}^{2}$, 
then the MLE is correct with probability 
$1/(d-1)$. 
This gives a contribution of 
\[
\frac{2}{t_{1} + 1} \cdot \frac{2}{t_{2} - 1} \cdot \frac{t_{2} - 3}{t_{2} + 1} \cdot \frac{1}{d-1} 
= \frac{4 \left( t_{2} - 3 \right)}{\left( t_{2} - 1 \right) \left( t_{1} + 1 \right) \left( t_{2} + 1 \right)} \cdot \frac{1}{d-1}.
\]
If $t_{1} > t_{2}$, then we have the same contribution as in the display above, with the indices $1$ and $2$ switched. 
Altogether we have obtained that the contribution in every subcase is 
\[
\frac{4 \left( \max \left\{ t_{1}, t_{2} \right\} - 3 \right)}{\left( \max \left\{ t_{1}, t_{2} \right\} - 1 \right) \left( t_{1} + 1 \right) \left( t_{2} + 1 \right)} \cdot \frac{1}{d-1} 
< 
\frac{4}{\left( t_{1} + 1 \right) \left( t_{2} + 1 \right)} \cdot \frac{1}{d-1}.
\]

Next, we turn to Case~\eqref{case:odd_odd_ball_notball_4} above. 
We have that if 
$\delta \left( v^{*}, \vs_{t_{1}}^{1} \right) = 1$, 
$\delta \left( v^{*}, \vs_{t_{2} - 1}^{2} \right) = 1$, 
and $\delta \left( v^{*}, \vs_{t_{2}}^{2} \right) = 2$, 
then the MLE is correct with probability $1/(d-1)$. 
This gives a contribution of 
$\frac{2}{t_{1} + 1} \cdot \frac{2}{t_{2} - 1} \cdot \frac{2}{t_{2} + 1} \cdot \frac{1}{d-1}$. 
There is an analogous term coming from when $G_{t_{1}}^{1}$ is not a ball and $G_{t_{2}}^{2}$ is a ball, 
giving a contribution of 
$\frac{2}{t_{2} + 1} \cdot \frac{2}{t_{1} - 1} \cdot \frac{2}{t_{1} + 1} \cdot \frac{1}{d-1}$. 
The combined contribution from the two terms is 
\[
\frac{4}{\left( t_{1} + 1 \right) \left( t_{2} + 1 \right)} \cdot \frac{1}{d - 1} \left( \frac{2}{t_{1} - 1} + \frac{2}{t_{2} - 1} \right) 
\leq \frac{4}{\left( t_{1} + 1 \right) \left( t_{2} + 1 \right)} \cdot \frac{1}{d - 1},
\]
where the inequality follows from the assumption that $\min \left\{ t_{1}, t_{2} \right\} \geq 5$.

Next, we turn to Case~\eqref{case:odd_odd_ball_notball_5} above. 
We have that if 
$\delta \left( v^{*}, \vs_{t_{1}}^{1} \right) = 1$, 
$\delta \left( v^{*}, \vs_{t_{2} - 1}^{2} \right) = 2$, 
and $\delta \left( v^{*}, \vs_{t_{2}}^{2} \right) = 3$, 
then the MLE is correct with probability $1/(d-1)$. 
This gives a contribution of 
$\frac{2}{t_{1} + 1} \cdot \frac{2}{t_{2} - 1} \cdot \frac{4}{t_{2} + 1} \cdot \frac{1}{d-1}$. 
There is an analogous term coming from when $G_{t_{1}}^{1}$ is not a ball and $G_{t_{2}}^{2}$ is a ball, 
giving a contribution of 
$\frac{2}{t_{2} + 1} \cdot \frac{2}{t_{1} - 1} \cdot \frac{4}{t_{1} + 1} \cdot \frac{1}{d-1}$. 
The combined contribution from the two terms is 
\[
\frac{4}{\left( t_{1} + 1 \right) \left( t_{2} + 1 \right)} \cdot \frac{1}{d - 1} \left( \frac{4}{t_{1} - 1} + \frac{4}{t_{2} - 1} \right) 
\leq \frac{4}{\left( t_{1} + 1 \right) \left( t_{2} + 1 \right)} \cdot \frac{2}{d - 1},
\]
where the inequality follows from the assumption that $\min \left\{ t_{1}, t_{2} \right\} \geq 5$. 

Next, we turn to Case~\eqref{case:odd_odd_notball_notball_7} above, where we have to distinguish between $d \geq 4$ and $d = 3$. 
When $d \geq 4$, 
then if 
$\delta \left( v^{*}, \vs_{t_{1} - 1}^{1} \right) = 1$, 
$\delta \left( v^{*}, \vs_{t_{1}}^{1} \right) = 2$, 
$\delta \left( v^{*}, \vs_{t_{2} - 1}^{2} \right) = 1$,  
$\delta \left( v^{*}, \vs_{t_{2}}^{2} \right) = 2$, 
and $\vs_{t_{1}}^{1} = \vs_{t_{2}}^{2}$, 
then the MLE is correct with probability $1/(2d-2)$. 
This gives a contribution of 
\begin{align}
\frac{2}{t_{1} - 1} \cdot \frac{2}{t_{1} + 1} \cdot \frac{2}{t_{2} - 1} \cdot \frac{2}{t_{2} + 1} \cdot \frac{1}{d-1} \cdot \frac{1}{2(d-1)} 
&= \frac{4}{\left( t_{1} + 1 \right) \left( t_{2} + 1 \right)} \cdot \frac{1}{(d-1)^{2}} \cdot \frac{2}{\left( t_{1} - 1 \right) \left( t_{2} - 1 \right)} \label{eq:odd_odd_case7} \\
&\leq \frac{1}{8} \cdot \frac{4}{\left( t_{1} + 1 \right) \left( t_{2} + 1 \right)} \cdot \frac{1}{(d-1)^{2}}, \notag
\end{align}
where the inequality follows from the assumption that $\min \left\{ t_{1}, t_{2} \right\} \geq 5$. 
When $d = 3$, there are now two situations when the MLE has a chance of being correct:  
\begin{itemize} 
\item if 
$\delta \left( v^{*}, \vs_{t_{1} - 1}^{1} \right) = 1$, 
$\delta \left( v^{*}, \vs_{t_{1}}^{1} \right) = 2$, 
$\delta \left( v^{*}, \vs_{t_{2} - 1}^{2} \right) = 1$,  
$\delta \left( v^{*}, \vs_{t_{2}}^{2} \right) = 2$, 
and $\vs_{t_{1}}^{1} = \vs_{t_{2}}^{2}$, 
\item or if 
$\delta \left( v^{*}, \vs_{t_{1} - 1}^{1} \right) = 2$, 
$\delta \left( v^{*}, \vs_{t_{1}}^{1} \right) = 3$, 
$\delta \left( v^{*}, \vs_{t_{2} - 1}^{2} \right) = 2$,  
$\delta \left( v^{*}, \vs_{t_{2}}^{2} \right) = 3$, 
$\vs_{t_{1} - 1}^{1} = \vs_{t_{2} - 1}^{2}$, 
and $\vs_{t_{1}}^{1} = \vs_{t_{2}}^{2}$; 
\end{itemize}
in both cases the MLE is correct with probability $1/12$. 
This gives a contribution of 
\begin{multline*}
\left( \frac{2}{t_{1} - 1} \cdot \frac{2}{t_{1} + 1} \cdot \frac{2}{t_{2} - 1} \cdot \frac{2}{t_{2} + 1} \cdot \frac{1}{2} 
+ \frac{2}{t_{1} - 1} \cdot \frac{4}{t_{1} + 1} \cdot \frac{2}{t_{2} - 1} \cdot \frac{4}{t_{2} + 1} \cdot \frac{1}{2} \cdot \frac{1}{2} \right) \cdot \frac{1}{12} \\
= \frac{2}{\left( t_{1} + 1 \right) \left( t_{2} + 1 \right) \left( t_{1} - 1 \right) \left( t_{2} - 1 \right)}.
\end{multline*}
This is equal to the quantity in~\eqref{eq:odd_odd_case7}, when $d = 3$ is substituted. Thus no matter what $d \geq 3$ is, the contribution is always at most 
\[
\frac{1}{8} \cdot \frac{4}{\left( t_{1} + 1 \right) \left( t_{2} + 1 \right)} \cdot \frac{1}{(d-1)^{2}}.
\]

Next, we turn to Case~\eqref{case:odd_odd_notball_notball_8} above, where we again distinguish between $d \geq 4$ and $d = 3$. 
When $d \geq 4$, 
then if 
$\delta \left( v^{*}, \vs_{t_{1} - 1}^{1} \right) = 1$, 
$\delta \left( v^{*}, \vs_{t_{1}}^{1} \right) = 2$, 
$\delta \left( v^{*}, \vs_{t_{2} - 1}^{2} \right) = 1$,  
$\delta \left( v^{*}, \vs_{t_{2}}^{2} \right) = 2$, 
and $\vs_{t_{1}}^{1} \neq \vs_{t_{2}}^{2}$, 
then the MLE is correct with probability $1/(d-2)$. 
This gives a contribution of 
\begin{align}
\frac{2}{t_{1} - 1} \cdot \frac{2}{t_{1} + 1} \cdot \frac{2}{t_{2} - 1} \cdot \frac{2}{t_{2} + 1} \cdot \frac{d-2}{d-1} \cdot \frac{1}{d-2} 
&= \frac{4}{\left( t_{1} + 1 \right) \left( t_{2} + 1 \right)} \cdot \frac{1}{d-1} \cdot \frac{4}{\left( t_{1} - 1 \right) \left( t_{2} - 1 \right)} \label{eq:odd_odd_case8} \\
&\leq \frac{1}{4} \cdot \frac{4}{\left( t_{1} + 1 \right) \left( t_{2} + 1 \right)} \cdot \frac{1}{d-1}, \notag
\end{align}
where the inequality follows from the assumption that $\min \left\{ t_{1}, t_{2} \right\} \geq 5$. 
When $d=3$, there are now four situations when the MLE has a chance of being correct: 
\begin{itemize}
\item if 
$\delta \left( v^{*}, \vs_{t_{1} - 1}^{1} \right) = 1$, 
$\delta \left( v^{*}, \vs_{t_{1}}^{1} \right) = 2$, 
$\delta \left( v^{*}, \vs_{t_{2} - 1}^{2} \right) = 1$,  
$\delta \left( v^{*}, \vs_{t_{2}}^{2} \right) = 2$, 
and $\vs_{t_{1}}^{1} \neq \vs_{t_{2}}^{2}$; 
\item or if 
$\delta \left( v^{*}, \vs_{t_{1} - 1}^{1} \right) = 2$, 
$\delta \left( v^{*}, \vs_{t_{1}}^{1} \right) = 3$, 
$\delta \left( v^{*}, \vs_{t_{2} - 1}^{2} \right) = 2$,  
$\delta \left( v^{*}, \vs_{t_{2}}^{2} \right) = 3$, 
$\vs_{t_{1} - 1}^{1} = \vs_{t_{2} - 1}^{2}$, 
and $\vs_{t_{1}}^{1} \neq \vs_{t_{2}}^{2}$; 
\item or if 
$\delta \left( v^{*}, \vs_{t_{1} - 1}^{1} \right) = 1$, 
$\delta \left( v^{*}, \vs_{t_{1}}^{1} \right) = 2$, 
$\delta \left( v^{*}, \vs_{t_{2} - 1}^{2} \right) = 2$,  
$\delta \left( v^{*}, \vs_{t_{2}}^{2} \right) = 3$, 
and $\vs_{t_{1}}^{1} = \vs_{t_{2} - 1}^{2}$;  
\item or if 
$\delta \left( v^{*}, \vs_{t_{1} - 1}^{1} \right) = 2$, 
$\delta \left( v^{*}, \vs_{t_{1}}^{1} \right) = 3$, 
$\delta \left( v^{*}, \vs_{t_{2} - 1}^{2} \right) = 1$,  
$\delta \left( v^{*}, \vs_{t_{2}}^{2} \right) = 2$, 
and $\vs_{t_{1} - 1}^{1} = \vs_{t_{2}}^{2}$; 
\end{itemize}
in all four cases the MLE is correct with probability $1/7$. 
This gives a contribution of 
\begin{align*}
&\frac{2}{t_{1} - 1} \cdot \frac{2}{t_{1} + 1} \cdot \frac{2}{t_{2} - 1} \cdot \frac{2}{t_{2} + 1} \cdot \frac{1}{2} \cdot \frac{1}{7} 
+ \frac{2}{t_{1} - 1} \cdot \frac{4}{t_{1} + 1} \cdot \frac{2}{t_{2} - 1} \cdot \frac{4}{t_{2} + 1} \cdot \frac{1}{2} \cdot \frac{1}{2} \cdot \frac{1}{7}  \\
&+ 
\frac{2}{t_{1} - 1} \cdot \frac{2}{t_{1} + 1} \cdot \frac{2}{t_{2} - 1} \cdot \frac{4}{t_{2} + 1} \cdot \frac{1}{2} \cdot \frac{1}{7} 
+ \frac{2}{t_{1} - 1} \cdot \frac{4}{t_{1} + 1} \cdot \frac{2}{t_{2} - 1} \cdot \frac{2}{t_{2} + 1} \cdot \frac{1}{2} \cdot \frac{1}{7}   \\
&= \frac{8}{\left( t_{1} + 1 \right) \left( t_{2} + 1 \right) \left( t_{1} - 1 \right) \left( t_{2} - 1 \right)}.
\end{align*}
This is equal to the quantity in~\eqref{eq:odd_odd_case8}, when $d=3$ is substituted. Thus no matter what $d \geq 3$ is, the contribution is always at most 
\[
\frac{1}{4} \cdot \frac{4}{\left( t_{1} + 1 \right) \left( t_{2} + 1 \right)} \cdot \frac{1}{d-1}.
\]

Finally, we turn to Case~\eqref{case:odd_odd_notball_notball_9} above. 
There are now two situations when the MLE has a chance of being correct: 
\begin{itemize}
\item if 
$\delta \left( v^{*}, \vs_{t_{1} - 1}^{1} \right) = 1$, 
$\delta \left( v^{*}, \vs_{t_{1}}^{1} \right) = 2$, 
$\delta \left( v^{*}, \vs_{t_{2} - 1}^{2} \right) = 2$,  
$\delta \left( v^{*}, \vs_{t_{2}}^{2} \right) = 3$, 
and $\vs_{t_{1}}^{1} \neq \vs_{t_{2} - 1}^{2}$; 
\item or if 
$\delta \left( v^{*}, \vs_{t_{1} - 1}^{1} \right) = 2$, 
$\delta \left( v^{*}, \vs_{t_{1}}^{1} \right) = 3$, 
$\delta \left( v^{*}, \vs_{t_{2} - 1}^{2} \right) = 1$,  
$\delta \left( v^{*}, \vs_{t_{2}}^{2} \right) = 2$, 
and $\vs_{t_{1} - 1}^{1} \neq \vs_{t_{2}}^{2}$; 
\end{itemize} 
in both cases the MLE is correct with probability $1/(2d-4)$. This gives a contribution of 
\begin{multline*}
\frac{2}{t_{1} - 1} \cdot \frac{2}{t_{1} + 1} \cdot \frac{2}{t_{2} - 1} \cdot \frac{4}{t_{2} + 1} \cdot \frac{d-2}{d-1} \cdot \frac{1}{2(d-2)} 
+ \frac{2}{t_{1} - 1} \cdot \frac{4}{t_{1} + 1} \cdot \frac{2}{t_{2} - 1} \cdot \frac{2}{t_{2} + 1} \cdot \frac{d-2}{d-1} \cdot \frac{1}{2(d-2)} \\
= \frac{4}{(t_{1} + 1) (t_{2} + 1)} \cdot \frac{1}{d-1} \cdot \frac{8}{(t_{1} - 1) (t_{2} - 1)}
\leq \frac{1}{2} \cdot \frac{4}{(t_{1} + 1) (t_{2} + 1)} \cdot \frac{1}{d-1},
\end{multline*}
where the inequality follows from the assumption that $\min \left\{ t_{1}, t_{2} \right\} \geq 5$. 

In summary, putting the contributions from all these cases together we are now ready to bound the probability 
$\p \left( \wh{v}_{\ML} = v^{*} \, \middle| \, A_{12}^{C} \right)$. We have that 
\begin{align}
\p \left( \wh{v}_{\ML} = v^{*} \, \middle| \, A_{12}^{C} \right) 
&\leq 
\frac{4}{\left( t_{1} + 1 \right) \left( t_{2} + 1 \right)} 
\left\{ \frac{1}{d} + \frac{1}{d-1} \left( 1 + 1 + 2 + \frac{1}{8(d-1)} + \frac{1}{4} + \frac{1}{2} \right) \right\} \notag\\
&\leq \frac{263}{24} \cdot \frac{1}{\left( t_{1} + 1 \right) \left( t_{2} + 1 \right)} 
\leq \frac{11}{\left( t_{1} + 1 \right) \left( t_{2} + 1 \right)}, \label{eq:probMLEcorrectA12C_simple}
\end{align}
where in the second inequality we used that $d \geq 3$.

We now turn to computing 
$\p \left( \wh{v}_{\ML} = v^{*} \, \middle| \, A_{12} \right)$. 
To do this, we condition on the values of 
$\delta \left( v^{*}, \vs_{t_{1} - 1}^{1} \right)$ 
and 
$\delta \left( v^{*}, \vs_{t_{2} - 1}^{2} \right)$. 
To this end, define for 
$1 \leq h_{1} \leq \left( t_{1} - 1 \right) / 2$ 
and 
$1 \leq h_{2} \leq \left( t_{2} - 1 \right) / 2$ 
the events 
\begin{align*}
D_{1} \left( h_{1} \right) 
&:= \left\{ \delta \left( v^{*}, \vs_{t_{1} - 1}^{1} \right) = h_{1} \right\}, \\
D_{2} \left( h_{2} \right) 
&:= \left\{ \delta \left( v^{*}, \vs_{t_{2} - 1}^{2} \right) = h_{2} \right\}. 
\end{align*}
The events $D_{1} \left( h_{1} \right)$ and $D_{2} \left( h_{2} \right)$ are independent 
and they are also independent of $A_{12}$, 
so 
\[
\p \left( D_{1} \left( h_{1} \right) \cap D_{2} \left( h_{2} \right) \, \middle| \, A_{12} \right) 
= \p \left( D_{1} \left( h_{1} \right) \cap D_{2} \left( h_{2} \right) \right) 
= \p \left( D_{1} \left( h_{1} \right) \right) \p \left( D_{2} \left( h_{2} \right) \right) 
= \frac{2}{t_{1} - 1} \cdot \frac{2}{t_{2} - 1}.
\]
Thus conditioning on the values of 
$\delta \left( v^{*}, \vs_{t_{1} - 1}^{1} \right)$ 
and 
$\delta \left( v^{*}, \vs_{t_{2} - 1}^{2} \right)$ 
we have that 
\begin{equation}\label{eq:condh1h2}
\p \left( \wh{v}_{\ML} = v^{*} \, \middle| \, A_{12} \right) 
= \frac{4}{\left( t_{1} - 1 \right) \left( t_{2} - 1 \right)} 
\sum_{h_{1} = 1}^{(t_{1} - 1) / 2} \sum_{h_{2} = 1}^{(t_{2} - 1) / 2} 
\p \left( \wh{v}_{\ML} = v^{*} \, \middle| \, A_{12} \cap D_{1} \left( h_{1} \right) \cap D_{2} \left( h_{2} \right) \right).
\end{equation}
So now what remains is to compute 
$\p \left( \wh{v}_{\ML} = v^{*} \, \middle| \, A_{12} \cap D_{1} \left( h_{1} \right) \cap D_{2} \left( h_{2} \right) \right)$ 
and to sum this over possible values of $h_{1}$ and $h_{2}$. 
To do this we further define the events 
\begin{align*}
B_{1} &:= \left\{ \vs_{t_{1} - 1}^{1} = \vs_{t_{1}}^{1} \right\}, \\
B_{2} &:= \left\{ \vs_{t_{2} - 1}^{2} = \vs_{t_{2}}^{2} \right\}. 
\end{align*}
In words, $B_{1}$ and $B_{2}$ are the events that the virtual source stays in place at the last time step in the first and the second sample, respectively. 
Note that $B_{1}$ and $B_{2}$ are independent, even conditioned on 
$A_{12} \cap D_{1} \left( h_{1} \right) \cap D_{2} \left( h_{2} \right)$. 
Therefore 
\[
\p \left( B_{1} \cap B_{2} \, \middle| \, A_{12} \cap D_{1} \left( h_{1} \right) \cap D_{2} \left( h_{2} \right) \right) 
= \p \left( B_{1} \, \middle| \, D_{1} \left( h_{1} \right) \right) \p \left( B_{2} \, \middle| \, D_{2} \left( h_{2} \right) \right) 
= \frac{t_{1} + 1 - 2h_{1}}{t_{1} + 1} \cdot \frac{t_{2} + 1 - 2h_{2}}{t_{2} + 1},
\]
and similarly
\begin{align*}
\p \left( B_{1}^{C} \cap B_{2} \, \middle| \, A_{12} \cap D_{1} \left( h_{1} \right) \cap D_{2} \left( h_{2} \right) \right) 
&= \frac{2h_{1}}{t_{1} + 1} \cdot \frac{t_{2} + 1 - 2h_{2}}{t_{2} + 1}, \\
\p \left( B_{1} \cap B_{2}^{C} \, \middle| \, A_{12} \cap D_{1} \left( h_{1} \right) \cap D_{2} \left( h_{2} \right) \right) 
&= \frac{t_{1} + 1 - 2h_{1}}{t_{1} + 1} \cdot \frac{2h_{2}}{t_{2} + 1}, \\
\p \left( B_{1}^{C} \cap B_{2}^{C} \, \middle| \, A_{12} \cap D_{1} \left( h_{1} \right) \cap D_{2} \left( h_{2} \right) \right) 
&= \frac{2h_{1}}{t_{1} + 1} \cdot \frac{2h_{2}}{t_{2} + 1}.
\end{align*} 
By conditioning on whether or not the events $B_{1}$ and $B_{2}$ hold, we may break up the probability 
$\p \left( \wh{v}_{\ML} = v^{*} \, \middle| \, A_{12} \cap D_{1} \left( h_{1} \right) \cap D_{2} \left( h_{2} \right) \right)$ 
into a sum with four terms: 
\begin{align}
&\p \left( \wh{v}_{\ML} = v^{*} \, \middle| \, A_{12} \cap D_{1} \left( h_{1} \right) \cap D_{2} \left( h_{2} \right) \right) \notag \\
&\qquad = \frac{\left( t_{1} + 1 - 2h_{1} \right) \left( t_{2} + 1 - 2h_{2} \right)}{\left( t_{1} + 1 \right) \left( t_{2} + 1 \right)} 
\p \left( \wh{v}_{\ML} = v^{*} \, \middle| \, A_{12} \cap D_{1} \left( h_{1} \right) \cap D_{2} \left( h_{2} \right) \cap B_{1} \cap B_{2} \right) \label{eq:sumB1B2} \\
&\qquad\quad + \frac{\left( 2h_{1} \right) \left( t_{2} + 1 - 2h_{2} \right)}{\left( t_{1} + 1 \right) \left( t_{2} + 1 \right)} 
\p \left( \wh{v}_{\ML} = v^{*} \, \middle| \, A_{12} \cap D_{1} \left( h_{1} \right) \cap D_{2} \left( h_{2} \right) \cap B_{1}^{C} \cap B_{2} \right) \label{eq:sumB1CB2} \\
&\qquad\quad + \frac{\left( t_{1} + 1 - 2h_{1} \right) \left( 2h_{2} \right)}{\left( t_{1} + 1 \right) \left( t_{2} + 1 \right)} 
\p \left( \wh{v}_{\ML} = v^{*} \, \middle| \, A_{12} \cap D_{1} \left( h_{1} \right) \cap D_{2} \left( h_{2} \right) \cap B_{1} \cap B_{2}^{C} \right) \label{eq:sumB1B2C} \\
&\qquad\quad + \frac{4 h_{1} h_{2}}{\left( t_{1} + 1 \right) \left( t_{2} + 1 \right)} 
\p \left( \wh{v}_{\ML} = v^{*} \, \middle| \, A_{12} \cap D_{1} \left( h_{1} \right) \cap D_{2} \left( h_{2} \right) \cap B_{1}^{C} \cap B_{2}^{C} \right) \label{eq:sumB1CB2C}
\end{align}
We now compute each of these four conditional probabilities in turn. 

We start with~\eqref{eq:sumB1CB2} and~\eqref{eq:sumB1B2C}, as these are the simplest cases among the four. Given the event 
$A_{12} \cap D_{1} \left( h_{1} \right) \cap D_{2} \left( h_{2} \right) \cap B_{1} \cap B_{2}^{C}$ 
in~\eqref{eq:sumB1B2C}, 
Case~\eqref{case:odd_odd_ball_notball_6} describes the MLE. 
Specifically, if $v'$ denotes the vertex in 
$P_{12} \cap V_{t_{1}}^{1} \cap V_{t_{2}}^{2}$ 
that is closest to $\vs^{1}$, 
then we have that $\wh{v}_{\ML} = v'$. 
Therefore in this case 
$\wh{v}_{\ML} = v^{*}$ if and only if $h_{1} = 1$ or $h_{2} = \left( t_{2} - 1 \right) / 2$. 
Thus we have that 
\begin{multline*}
\p \left( \wh{v}_{\ML} = v^{*} \, \middle| \, A_{12} \cap D_{1} \left( h_{1} \right) \cap D_{2} \left( h_{2} \right) \cap B_{1} \cap B_{2}^{C} \right) 
= \mathbf{1}_{\left\{ h_{1} = 1 \right\} \cup \left\{ h_{2} = \left( t_{2} - 1 \right) / 2 \right\}} \\
= \mathbf{1}_{\left\{ h_{1} = 1 \right\}} + \mathbf{1}_{\left\{ h_{2} = \left( t_{2} - 1 \right) / 2 \right\}} - \mathbf{1}_{\left\{ h_{1} = 1 \right\} \cap \left\{ h_{2} = \left( t_{2} - 1 \right) / 2 \right\}}.
\end{multline*}
Plugging this back into~\eqref{eq:sumB1B2C} and summing over $h_{1}$ and $h_{2}$ we obtain that 
\begin{align*}
&\sum_{h_{1} = 1}^{(t_{1} - 1) / 2} \sum_{h_{2} = 1}^{(t_{2} - 1) / 2} 
\frac{\left( t_{1} + 1 - 2h_{1} \right) \left( 2h_{2} \right)}{\left( t_{1} + 1 \right) \left( t_{2} + 1 \right)} 
\p \left( \wh{v}_{\ML} = v^{*} \, \middle| \, A_{12} \cap D_{1} \left( h_{1} \right) \cap D_{2} \left( h_{2} \right) \cap B_{1} \cap B_{2}^{C} \right) \\
&\qquad = \frac{4}{\left( t_{1} + 1 \right) \left( t_{2} + 1 \right)} 
\sum_{h_{1} = 1}^{(t_{1} - 1) / 2} \sum_{h_{2} = 1}^{(t_{2} - 1) / 2} 
\left( \frac{t_{1} + 1}{2} - h_{1} \right) h_{2} 
\mathbf{1}_{\left\{ h_{1} = 1 \right\} \cup \left\{ h_{2} = \left( t_{2} - 1 \right) / 2 \right\}} \\
&\qquad 
= \frac{4}{\left( t_{1} + 1 \right) \left( t_{2} + 1 \right)} \cdot \frac{t_{1} - 1}{2} \sum_{h_{2} = 1}^{(t_{2} - 1) / 2} h_{2} 
+ \frac{4}{\left( t_{1} + 1 \right) \left( t_{2} + 1 \right)} \cdot \frac{t_{2} - 1}{2} \sum_{h_{1} = 1}^{(t_{1} - 1) / 2} \left( \frac{t_{1} + 1}{2} - h_{1} \right) \\
&\qquad\quad - \frac{4}{\left( t_{1} + 1 \right) \left( t_{2} + 1 \right)} \cdot \frac{t_{1} - 1}{2} \cdot \frac{t_{2} - 1}{2} \\
&\qquad 
= \frac{\left( t_{1} - 1 \right) \left( t_{2} - 1 \right)}{4 \left( t_{1} + 1 \right)} 
+ \frac{\left( t_{1} - 1 \right) \left( t_{2} - 1 \right)}{4 \left( t_{2} + 1 \right)} 
- \frac{\left( t_{1} - 1 \right) \left( t_{2} - 1 \right)}{\left( t_{1} + 1 \right) \left( t_{2} + 1 \right)}. 
\end{align*}
Multiplying this expression by 
$4 / \left\{ \left( t_{1} - 1 \right) \left( t_{2} - 1 \right) \right\}$ 
we thus see that the contribution to~\eqref{eq:condh1h2} from~\eqref{eq:sumB1B2C} is 
\begin{equation}\label{eq:contributionB1B2C}
\frac{1}{t_{1} + 1} + \frac{1}{t_{2} + 1} - \frac{4}{\left( t_{1} + 1 \right) \left( t_{2} + 1 \right)}.
\end{equation}
Observe that~\eqref{eq:sumB1CB2} is analogous to~\eqref{eq:sumB1B2C} with the two samples switched. 
Since the expression in~\eqref{eq:contributionB1B2C} is symmetric with respect to $t_{1}$ and $t_{2}$, this means that the 
contribution to~\eqref{eq:condh1h2} from~\eqref{eq:sumB1CB2} is 
also equal to the expression in~\eqref{eq:contributionB1B2C}.

We now turn to the expression in~\eqref{eq:sumB1CB2C}. 
Given the event 
$A_{12} \cap D_{1} \left( h_{1} \right) \cap D_{2} \left( h_{2} \right) \cap B_{1}^{C} \cap B_{2}^{C}$, 
Case~\eqref{case:odd_odd_notball_notball_10} 
describes the MLE. 
In particular, note that on this event we have that 
$X_{1} (v) + X_{2} (v) = h_{1} + h_{2}$ for every $v \in P_{12}$, 
and thus 
$X_{1}(v) X_{2} (v) = X_{1}(v) \left( h_{1} + h_{2} - X_{1}(v) \right)$. 
Letting $d := X_{1}(v)$ to abbreviate notation, 
note that the function 
$d \mapsto d \left( h_{1} + h_{2} - d \right)$ 
is a quadratic function that 
is maximized (among integers) at $d = \left( h_{1} + h_{2} \right) / 2$ if $h_{1} + h_{2}$ is even, 
and at $d = \left( h_{1} + h_{2} - 1\right) / 2$ and $d = \left( h_{1} + h_{2} + 1\right) / 2$ if $h_{1} + h_{2}$ is odd. 
This allows us to understand the MLE and the expression in~\eqref{eq:sumB1CB2C} as follows: 
\begin{itemize}
\item If $h_{1} = h_{2} = 1$, then if $d = 3$ then the MLE is correct with probability $1/2$, and if $d \geq 4$ then the MLE is correct with probability $1$. In either case we can bound this probability by $1$.

\item If $2 < h_{1} + h_{2} \leq \min \left\{ t_{1}, t_{2} \right\} - 1$ and $h_{1} + h_{2}$ is even, then $\wh{v}_{\ML}$ is the unique vertex $v$ such that 
$X_{1} (v) = X_{2} (v) = (h_{1} + h_{2}) / 2$. 
Thus in this case $\wh{v}_{\ML} = v^{*}$ if and only if $h_{1} = h_{2}$. 

\item If $2 < h_{1} + h_{2} \leq \min \left\{ t_{1}, t_{2} \right\} - 1$ and $h_{1} + h_{2}$ is odd, 
then $\wh{v}_{\ML}$ picks uniformly at random among the two vertices 
for which $\left\{ X_{1} (v), X_{2} (v) \right\} = \left\{ (h_{1} + h_{2} - 1) / 2, (h_{1} + h_{2} + 1) / 2 \right\}$. 
Thus in this case the MLE is correct with probability $1/2$ if $\left| h_{1} - h_{2} \right| = 1$, and not correct otherwise. 

\item If $h_{1} + h_{2} \geq \min \left\{ t_{1}, t_{2} \right\}$ (note that this can only occur if $t_{1} \neq t_{2}$), then 
if $v \in P_{12}$ is such that $X_{1}(v) \in \left\{ \left\lfloor (h_{1} + h_{2}) / 2 \right\rfloor, \left\lceil (h_{1} + h_{2}) / 2 \right\rceil \right\}$, 
then $v \notin V_{t_{1}}^{1} \cap V_{t_{2}}^{2}$.  
Therefore, since $d \mapsto d \left( h_{1} + h_{2} - d \right)$ is a quadratic function, 
$\wh{v}_{\ML}$ is the unique vertex $v \in P_{12} \cap V_{t_{1}}^{1} \cap V_{t_{2}}^{2}$ such that $X_{1}(v)$ is closest to 
$\left\{ \left\lfloor (h_{1} + h_{2}) / 2 \right\rfloor, \left\lceil (h_{1} + h_{2}) / 2 \right\rceil \right\}$. 
To understand this better, assume (without loss of generality) that $t_{1} \leq t_{2}$. 
Then we have that $\wh{v}_{\ML} = v^{*}$ if and only if $h_{1} = (t_{1} - 1) / 2$. 

\end{itemize}
Altogether we have obtained, assuming $t_{1} \leq t_{2}$, that 
\begin{multline*}
\p \left( \wh{v}_{\ML} = v^{*} \, \middle| \, A_{12} \cap D_{1} \left( h_{1} \right) \cap D_{2} \left( h_{2} \right) \cap B_{1}^{C} \cap B_{2}^{C} \right) \\
\leq 
\mathbf{1}_{\left\{ h_{1} = h_{2} \right\}} 
+ \frac{1}{2} \mathbf{1}_{\left\{ \left| h_{1} - h_{2} \right| = 1, h_{1} + h_{2} \leq t_{1} - 1 \right\}} 
+ \mathbf{1}_{\left\{ h_{1} = (t_{1} - 1) / 2, h_{1} + h_{2} \geq t_{1} \right\}}.
\end{multline*}
Multiplying the right hand side by $h_{1} h_{2}$ and summing over $h_{1}$ and $h_{2}$, we obtain that 
\begin{multline*}
\sum_{h_{1} = 1}^{(t_{1} - 1) / 2} \sum_{h_{2} = 1}^{(t_{2} - 1) / 2} 
h_{1} h_{2} \left( \mathbf{1}_{\left\{ h_{1} = h_{2} \right\}} 
+ \frac{1}{2} \mathbf{1}_{\left\{ \left| h_{1} - h_{2} \right| = 1, h_{1} + h_{2} \leq t_{1} - 1 \right\}} 
+ \mathbf{1}_{\left\{ h_{1} = (t_{1} - 1) / 2, h_{1} + h_{2} \geq t_{1} \right\}} \right) \\
\begin{aligned}
&= 
\sum_{h = 1}^{(t_{1} - 1) / 2} h^{2} 
+ \sum_{h = 1}^{(t_{1} - 3) / 2} h (h + 1) 
+ \frac{t_{1} - 1}{2} \sum_{h_{2} = (t_{1} + 1) / 2}^{(t_{2} - 1) / 2} h_{2} \qquad \\
&= 
\frac{\left( t_{1} - 1 \right) \left[ \left( t_{1} - 3 \right) \left( t_{1} + 1 \right) + 3 \left( t_{2} - 1 \right) \left( t_{2} + 1 \right) \right]}{48}. \qquad
\end{aligned}
\end{multline*}
Multiplying this by 
$16 / \left\{ \left( t_{1} - 1 \right) \left( t_{2} - 1 \right) \left( t_{1} + 1 \right) \left( t_{2} + 1 \right) \right\}$ 
we thus see that the contribution to~\eqref{eq:condh1h2} from~\eqref{eq:sumB1CB2C} is at most
\[
\frac{\left( t_{1} - 3 \right) \left( t_{1} + 1 \right) + 3 \left( t_{2} - 1 \right) \left( t_{2} + 1 \right)}{3 \left( t_{2} - 1 \right) \left( t_{1} + 1 \right) \left( t_{2} + 1 \right)}
= \frac{1}{t_{1} + 1} + \frac{t_{1} - 3}{3\left( t_{2} - 1 \right) \left( t_{2} + 1 \right)} 
\leq \frac{1}{t_{1} + 1} + \frac{1/3}{t_{2} + 1}.
\]
Recall that here we assumed that $t_{1} \leq t_{2}$, 
so in general the contribution to~\eqref{eq:condh1h2} from~\eqref{eq:sumB1CB2C} is at most 
\begin{equation}\label{eq:contribution_B1CB2C}
\frac{1}{\min \left\{ t_{1}, t_{2} \right\} + 1} + \frac{1/3}{\max \left\{ t_{1}, t_{2} \right\} + 1}.
\end{equation}

The expression in~\eqref{eq:sumB1B2} is similar to that in~\eqref{eq:sumB1CB2C}; 
in fact, it turns out that the contribution to~\eqref{eq:condh1h2} from~\eqref{eq:sumB1B2} is also at most the quantity in~\eqref{eq:contribution_B1CB2C}. 
Since the analysis of~\eqref{eq:sumB1B2} is analogous to that of~\eqref{eq:sumB1CB2C} done above, we omit it for brevity. 

Putting everything together, in particular~\eqref{eq:condh1h2}, the cases~\eqref{eq:sumB1B2}---\eqref{eq:sumB1CB2C}, and the corresponding contributions~\eqref{eq:contributionB1B2C} and~\eqref{eq:contribution_B1CB2C}, 
and recalling that both of these contributions should be counted twice, 
we obtain that 
\begin{align}
\p \left( \wh{v}_{\ML} = v^{*} \, \middle| \, A_{12} \right) 
&\leq \frac{2}{t_{1} + 1} + \frac{2}{t_{2} + 1} + \frac{2}{\min \left\{ t_{1} , t_{2} \right\} + 1} + \frac{2/3}{\max \left\{ t_{1}, t_{2} \right\} + 1} - \frac{8}{\left( t_{1} + 1 \right) \left( t_{2} + 1 \right)} \notag \\
&\leq \frac{6 + 2/3}{\min \left\{ t_{1}, t_{2} \right\} + 1} - \frac{8}{\left( t_{1} + 1 \right) \left( t_{2} + 1 \right)}. \label{eq:probMLEcorrectA12_simple}
\end{align}

Now finally putting together~\eqref{eq:MLE_total_prob_odd_odd},~\eqref{eq:probMLEcorrectA12C_simple}, and~\eqref{eq:probMLEcorrectA12_simple}, 
we obtain that 
\begin{align*}
\p \left( \wh{v}_{\ML} = v^{*} \right) 
&\leq \frac{d-1}{d} \left( \frac{6 + 2/3}{\min \left\{ t_{1}, t_{2} \right\} + 1} - \frac{8}{\left( t_{1} + 1 \right) \left( t_{2} + 1 \right)} \right) 
+ \frac{1}{d} \cdot \frac{11}{\left( t_{1} + 1 \right) \left( t_{2} + 1 \right)} \\
&= \frac{d-1}{d} \cdot \frac{6 + 2/3}{\min \left\{ t_{1}, t_{2} \right\} + 1} 
+ \frac{11 - 8 \left( d - 1 \right)}{d\left( t_{1} + 1 \right) \left( t_{2} + 1 \right)}.
\end{align*}
Since $d \geq 3$, we have that $11 - 8 (d - 1) < 0$, so the second term above is negative. 
Therefore
\[
\p \left( \wh{v}_{\ML} = v^{*} \right) 
\leq 
\frac{d-1}{d} \cdot \frac{6 + 2/3}{\min \left\{ t_{1}, t_{2} \right\} + 1},
\]
which concludes the proof. 
\end{proof}

\end{document}